\pgfplotsset{compat=1.18}
\renewcommand\subsubsection{\@secnumfont}{\bfseries}%
\renewcommand\subsubsection{\@startsection{subsubsection}{3}
  \z@{.5\linespacing\@plus.7\linespacing}{-.5em}%
  {\normalfont\bfseries}}
\newtheorem{theorem}{Theorem}
\newtheorem{lemma}{Lemma}
\newtheorem{corollary}{Corollary}
\newtheorem{definition}{Definition}
\newtheorem{proposition}{Proposition}
\theoremstyle{definition}
\newtheorem{remark}{Remark}
\DeclarePairedDelimiter{\set}{\lbrace}{\rbrace}
\DeclarePairedDelimiter{\floor}{\lfloor}{\rfloor}
\DeclarePairedDelimiter{\of}{\lparen}{\rparen}
\DeclarePairedDelimiter{\ipp}{\langle}{\rangle}
\newcommand{\id}{\mathbbm{1}}
\newcommand{\pt}{\mathbin{\vdash}} 
\newcommand{\defeq}{\vcentcolon=}
\newcommand{\maj}{\mathrm{maj}}
\newcommand{\SYT}{\mathrm{SYT}}
\newcommand{\tp}{^{\mathsf{T}}}
\newcommand{\QSI}[1][]{\textup{QSI}}
\let\S\relax
\DeclareMathOperator{\S}{S} 
\DeclareMathOperator{\C}{\mathbb{C}} 
\DeclareMathOperator{\CS}{\mathbb{C}S} 
\DeclareMathOperator{\End}{End} 
\newcounter{custalgocounter}
\renewcommand{\thecustalgocounter}{\arabic{custalgocounter}}
\newenvironment{custalgo}[1][]
{%
  \noindent \begin{minipage}{\textwidth}
    \begin{mdframed}[
      linewidth=0.8pt,
      roundcorner=5pt,
      backgroundcolor=white!,
      innertopmargin=10pt,
      innerbottommargin=10pt,
      innerleftmargin=10pt,
      innerrightmargin=10pt,
      skipabove=\topsep,
      skipbelow=\topsep,
    ]
      \refstepcounter{custalgocounter}
      \begin{center}
        \textbf{Algorithm \thecustalgocounter:} #1
        \linebreak
      \end{center} 
}%
{%
    \end{mdframed}
      \end{minipage}
}
\title{Permutation tests for quantum state identity}
\author{Harry Buhrman$^{1,2,3}$}
\email{h.m.buhrman@uva.nl}
\address{\hspace{-21pt}$^1$QuSoft, Amsterdam, The Netherlands}
\address{$^2$Institute for Logic, Language and Computation, University of Amsterdam, The Netherlands}
\address{$^3$Quantinuum, Partnership House, Carlisle Place, London, United Kingdom}
\author{Dmitry Grinko$^{1,2}$}
\email{d.grinko@uva.nl}
\author{Philip Verduyn Lunel$^{1,4}$}
\email{philip.verduyn.lunel@cwi.nl}
\address{$^4$Centrum Wiskunde \& Informatica, Amsterdam, The Netherlands}
\author{Jordi Weggemans$^{1,4}$}
\email{jrw@cwi.nl}
\begin{document}

\begin{abstract}
The quantum analogue of the equality function, known as the \textit{quantum state identity problem}, is the task of deciding whether $n$ unknown quantum states are equal or unequal, given the promise that all states are either pairwise orthogonal or identical. Under the one-sided error requirement, it is known that the Permutation test is optimal for this task, and for two input states this coincides with the well-known Swap test. Until now, the optimal measurement in the general two-sided error regime was unknown. Under more specific promises, the problem can be solved approximately or even optimally with simpler tests, such as the Circle test.

This work attempts to capture the underlying structure of the quantum state identity problem. Using tools from semidefinite programming and representation theory, we (i) give an optimal test for any input distribution without the one-sided error requirement by writing the problem as an SDP, giving the exact solutions to the primal and dual programs and showing that the two values coincide; (ii) propose a general $G$-test which uses an arbitrary subgroup $G$ of $\S_n$, giving an analytic expression of the performance of the specific test, and (iii) give an approximation of the Permutation test using only a classical permutation and $n-1$ Swap tests. 
\end{abstract}
\maketitle

\tableofcontents

\section{Introduction}
\noindent
One of the most basic tasks in quantum information is to compare two or more unknown quantum states.
In particular, we will be interested in a quantum version of the equality function, known as the \textit{quantum state identity problem} ($\QSI_{n}$)~\cite{kada2008efficiency}.
Here one is given $n$ \textit{unknown} quantum states $\ket{\psi_1},\ket{\psi_2},\dotsc,\ket{\psi_n}$, each of local dimension $d$, with the promise that all states are either pairwise orthogonal or identical.
Given as an input such a collection of unknown states satisfying the promise, the task is now to decide whether the input states are all equal or not, with high probability.
This problem arises naturally in quantum fingerprinting~\cite{buhrman2001fingerprinting} and in verification tasks in quantum networks, where one wishes to certify that multiple quantum states were prepared identically without revealing any information about the states themselves.

For two input states, the corresponding problem $\QSI_2$ can be solved by the so-called \emph{Swap Test}~\cite{barenco1997stabilization,buhrman2001fingerprinting}.
The Swap test applies, controlled by a qubit in the $\ket{+}$ state in the first register, a controlled Swap operation to the input states, followed by a Hadamard operation and measurement in the computational basis of the first register.
The controlled swap operation can be viewed as applying in superposition all permutations of the symmetric group $\S_2$ to the input states, which consists of only the identity and swap of both input states.
It is known that under the so-called \emph{one-sided error requirement}, i.e.~when the algorithm has to be always correct on equal inputs, the Swap Test is optimal for $\QSI_{2}$~\cite{kobayashi2001quantum,de2004one}.
For arbitrary $n$, one can use generalizations of the Swap test like the circle test or the permutation test~\cite{barenco1997stabilization,buhrman2001fingerprinting,kada2008efficiency}.
Both tests extend the Swap test in the sense that they perform in superposition all permutations from a subgroup of the symmetric group $\S_n$: for the circle test this is the cyclic group $\textup{C}_n$ and for the permutation test the subgroup is $\S_n$ itself.
Kada, Nishimura and Yamakami proved the optimality of the permutation test for $\QSI_n$ for arbitrary $n$ under the one-sided error requirement~\cite{kada2008efficiency}.
It was not known whether both tests are still optimal once the one-sided error requirement is relaxed.

Even though the permutation test is optimal under the one-sided error requirement, its circuit complexity is considerable: it requires an ancilla register of dimension $n!$ and a Fourier transform over the symmetric group $\S_n$~\cite{bradshawCycleIndexPolynomials2023a,labordeTestingSymmetryQuantum2023a}.
In contrast,~\cite{kada2008efficiency} showed that when $n$ is prime, the circle test achieves the same optimal soundness error for $\QSI_n$, while only requiring a Fourier transform over the cyclic group $\textup{C}_n$ on a register of dimension $n$ --- an exponential reduction.
This suggests that even though the permutation test is \textit{information-theoretically} optimal, there might be specific instances of $\QSI_{n}$ for which other tests have better \textit{circuit complexity} whilst still being optimal.
Something similar was observed in~\cite{Chabaud2018optimal}, where a Swap test-based circuit was shown to be optimal for $\QSI_n$ when the promise is that exactly one state is different and its location is known.

Moreover, if one slightly relaxes the requirement of optimality,~\cite{kada2008efficiency} showed that when one randomly permutes the input before the circle test is performed, it approximates the permutation test in the sense that it retains perfect completeness and the dominant term in the soundness error achieves the same scaling up to a multiplicative factor smaller than $2$, which holds for arbitrary $n$ instead of just prime numbers.
They also gave an approximation procedure based on the Swap test to approximate $\QSI_3$, for which both the Circle test and Permutation test are optimal.

These results hint at some underlying mathematical structure in $\QSI_n$, which is currently not fully understood.
Using tools from representation theory and semidefinite programming and by fine-graining the promises of the quantum state identity problem, this work seeks to unravel this structure, allowing us to answer some of the 15-year-old open questions related to the quantum state identity problem.
Concretely, we prove that the permutation test is optimal for all formulations of the quantum state identity problem even when the one-sided error requirement is relaxed, we derive an exact formula for the performance of any $G$-test for an arbitrary subgroup $G \subseteq \S_n$ in terms of Kostka numbers and representation-theoretic multiplicities, and we introduce the \emph{Iterated Swap Tree} as a new protocol that achieves near-optimal soundness using only classical randomness and $n-1$ Swap tests.

\section{Results}
Write $\mu \pt n$ for a partition of the number $n$, that is $\mu = (\mu_1,\dots,\mu_d)$ such that $\mu_1 \geq \dotsc \geq \mu_d \geq 0$ and $\sum_{i=1}^d \mu_i = n$. We also write $\lambda \pt_d n$ to mean that $\lambda$ is a partition of at most $d$ parts, i.e. for every $i>d$ we have $\lambda_i = 0$. We associate with $\mu$ corresponding quantum state $\ket{1^{\mu_1} 2^{\mu_2} \dotsc d^{\mu_d}}$, where $\ket{i^{\mu_i}} \defeq \ket{i}^{\otimes \mu_i}$ for a basis state $\ket{i}$ from a Hilbert space of local dimension $d$. Note that the number of non-zero entries in $\mu$ specifies the total number of different basis states in the input state. We can now define several formulations of the quantum state identity problem.

\begin{definition}[Quantum State Identity problems] \label{def:QSI}
\phantom{} Let  $d \in \mathbb{N}$, $p \in [0,1]$. We consider the following promise problems given by the following input, promise and output: \\
\textbf{Input:} $n$ unknown quantum states $\ket{\psi_1},\ket{\psi_2},\dotsc,\ket{\psi_n}$, each of local dimension $d$ and pairwise orthogonal or identical.\\
\textbf{Promise:} We have that either
\begin{enumerate}[label=(\roman*)]
    \item All $\ket{\psi_i}$ are identical.
    \item We have one of the promises for each of the following problems:
    \begin{enumerate}
        \item[$\QSI_\mu^{p}$:] One is given some $\mu \pt n$  with $\mu_2>0$ such that for some unknown unitary $U$ of dimension $d$
        \begin{equation*}
            \ket{\psi_1}\ket{\psi_{2}}\dotsc\ket{\psi_{n}} = U^{\otimes n} \ket{1^{\mu_1} 2^{\mu_2} \dotsc d^{\mu_d}},
        \end{equation*}
        \item[$\widetilde{\QSI}_\mu^{p}$:] One is given some $\mu \pt n$ with $\mu_2>0$ such that for some unknown unitary $U$ of dimension $d$ and some unknown permutation $\sigma$
        \begin{equation*}
            \ket{\psi_{\sigma(1)}}\ket{\psi_{\sigma(2)}}\dotsc\ket{\psi_{\sigma(n)}} = U^{\otimes n} \ket{1^{\mu_1} 2^{\mu_2} \dotsc d^{\mu_d}}.
        \end{equation*}
        \item[${\QSI}_n^{p}$:] There exist an $i,j \in [n]$ such that $\ket{\psi_i}$, $\ket{\psi_j}$ are pairwise orthogonal.
    \end{enumerate}
\end{enumerate}
Moreover, one is promised that (i) happens with probability $p$ and (ii) with $1-p$. If $p$ is not specified (denoted ${\QSI_\mu}$), one can assume that $p=1/2$.\\
\textbf{Output:} Return `\emph{equal}' in case (i) and `\emph{not equal}' in case (ii). 
\end{definition}
Similar to the type I and type II errors in hypothesis testing, we refer to the probability of being correct in case (i) as the \emph{completeness} probability and in case (ii) as the \emph{soundness} probability. 
The overall success probability is then the average over both the completeness and soundness cases, weighted according to the prior $\{p,1-p\}$. 
When we say perfect completeness (or soundness), we mean that the probability of being correct in this case is $1$.

Note that the quantum state identity problems are stated in such a way that the promise in each formulation captures less information as we go from top to bottom. $\hyperref[def:QSI]{\QSI_\mu^{p}}$ captures the setting where one has all the information on where different states are located, as one can simply relabel the input states to obtain this form. 
This captures, for example, the setting of~\cite{Chabaud2018optimal} where $\mu = (n-1,1,0,\dots,0)$. In $\hyperref[def:QSI]{\widetilde{\QSI}_\mu^{p}}$, all the information one has is in knowing how many states $h$ out of $n$ are different, since this is given by the number of non-zero entries in $\mu$. 
Finally, $\hyperref[def:QSI]{{\QSI}_n}$ is the usually adopted definition of the quantum state identity problem as given in~\cite{kada2008efficiency}, where in case (ii) one knows nothing of the location, nor how many of the states are different, except that there is at least one.

As an interesting comparison, in \cite{buhrman2022quantum} the authors give an optimal way to compute any symmetric, equivariant Boolean function in an unknown quantum basis, such as the majority and parity function. When we restrict ourselves to the qubit case input, we see that what we actually compute is the $\textup{OR}$-function, which is not an equivariant Boolean function, in an unknown quantum basis. 

\subsubsection*{Optimal test}
Our first result is a very general characterization of the optimality of the Permutation Test for quantum state identity problems.

\begin{theorem}[Informal, from Theorem~\ref{thm:permtest_trivial_optimal}] 
Suppose that the unequal and equal cases of the quantum state identity problems, as per Definition~\ref{def:QSI}, occur equally likely. Then for any $\mu \pt n$ (known or unknown), the permutation test is optimal for all quantum state identity problems, satisfies perfect completeness, and has soundness 
\begin{align*}
    1-\frac{1}{\binom{n}{\mu}},
\end{align*}
where $\binom{n}{\mu} \defeq \frac{n!}{\mu_1!\dotsc\mu_d!}$ is a multinomial coefficient.
\label{thm:inf_PT_optimal}    
\end{theorem}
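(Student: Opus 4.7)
The plan is to recast the task as a two-hypothesis state discrimination problem and apply the Holevo--Helstrom bound. Under the balanced prior $p=1/2$, the maximum success probability over all POVMs equals $\tfrac{1}{2}+\tfrac{1}{4}\|\rho_e-\rho_u\|_1$, attained by projecting onto the positive eigenspace of $\rho_e-\rho_u$, where $\rho_e$ and $\rho_u$ are the density matrices describing cases (i) and (ii) of Definition~\ref{def:QSI} after averaging out the unknowns. The goal is to show that this positive eigenspace is exactly the symmetric subspace of $(\mathbb{C}^d)^{\otimes n}$, so that the Holevo--Helstrom optimal POVM $\{\Pi_{\mathrm{sym}},\Pi_{\mathrm{sym}}^{\perp}\}$ coincides with the Permutation Test, which implements Fourier sampling on $\S_n$ and accepts on the trivial representation.

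I would first compute $\rho_e$ and the acceptance probability of the Permutation Test. Twirling the unknown common pure state over $U(d)$ gives $\rho_e=\Pi_{\mathrm{sym}}/\binom{n+d-1}{n}$, so $\mathrm{Tr}(\Pi_{\mathrm{sym}}\rho_e)=1$ and completeness is perfect. In case (ii), set $\ket{v_\mu}\defeq\ket{1^{\mu_1}\cdots d^{\mu_d}}$; since $\Pi_{\mathrm{sym}}$ commutes with every $U^{\otimes n}$ and every $P_\sigma$, any averaging of these operators passes through cyclicity of the trace to give
\[
\mathrm{Tr}(\Pi_{\mathrm{sym}}\rho_u) = \bra{v_\mu}\Pi_{\mathrm{sym}}\ket{v_\mu} = \frac{\mu_1!\cdots\mu_d!}{n!} = \frac{1}{\binom{n}{\mu}}
\]
uniformly across the three formulations, which yields the claimed soundness $1-1/\binom{n}{\mu}$ of the Permutation Test.

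For optimality I invoke Schur--Weyl duality $(\mathbb{C}^d)^{\otimes n}\cong\bigoplus_{\lambda\pt_d n} V_\lambda\otimes W_\lambda$. The $U^{\otimes n}$-twirl annihilates cross-isotypic components and acts on each $V_\lambda$ as $A\mapsto\tfrac{\mathrm{Tr}(A)}{\dim V_\lambda}I_{V_\lambda}$, so $\rho_u$ takes the block form $\bigoplus_\lambda \tfrac{I_{V_\lambda}}{\dim V_\lambda}\otimes\tau_\lambda^\mu$ for some operators $\tau_\lambda^\mu$ on $W_\lambda$ (the additional $\S_n$-averaging for $\widetilde{\QSI}_\mu^p$ further replaces $\tau_\lambda^\mu$ by a multiple of $I_{W_\lambda}$, but this refinement is not needed). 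Since $\Pi_{\mathrm{sym}}$ is the isotypic projector onto $\lambda=(n)$ and $W_{(n)}\cong\mathbb{C}$, $\rho_u$ commutes with $\Pi_{\mathrm{sym}}$ and the trace computation above forces $\Pi_{\mathrm{sym}}\rho_u\Pi_{\mathrm{sym}}=\Pi_{\mathrm{sym}}/(\binom{n}{\mu}\binom{n+d-1}{n})$. Subtracting yields
\[
\rho_e-\rho_u = \frac{1-1/\binom{n}{\mu}}{\binom{n+d-1}{n}}\,\Pi_{\mathrm{sym}} - \Pi_{\mathrm{sym}}^{\perp}\rho_u\,\Pi_{\mathrm{sym}}^{\perp},
\]
whose positive part is a multiple of $\Pi_{\mathrm{sym}}$; Holevo--Helstrom therefore selects $\{\Pi_{\mathrm{sym}},\Pi_{\mathrm{sym}}^{\perp}\}$ as an optimal POVM, exactly the Permutation Test. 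The $\QSI_n^p$ case is handled by running this argument at each fixed $\mu$ with $\mu_2>0$, with the adversarial worst case attained at $\mu=(n-1,1)$.

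The main obstacle is ensuring that $\rho_u$ block-diagonalises with respect to $\Pi_{\mathrm{sym}}$ in the $\QSI_\mu^p$ setting, where $\rho_u$ is only $U(d)$-invariant and not $\S_n$-invariant; a priori the Holevo--Helstrom projector could mix symmetric and non-symmetric contributions. The Schur--Weyl block form above resolves this cleanly: the $U$-twirl already respects the $\lambda$-isotypic decomposition, and since $\Pi_{\mathrm{sym}}$ is itself an isotypic projector, the positive part of $\rho_e-\rho_u$ is pinned to $\Pi_{\mathrm{sym}}$.
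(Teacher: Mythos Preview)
Your argument is correct and takes a genuinely different route from the paper. The paper casts the problem as an SDP, computes $\rho_{\neq}^\mu$ explicitly via Weingarten calculus (Lemma~\ref{lem:main_lemma}), and certifies optimality by exhibiting a matching dual feasible point. You instead invoke Holevo--Helstrom directly: the $U^{\otimes n}$-twirl forces $\rho_u$ to be block-diagonal in the Schur--Weyl isotypic decomposition, so $\rho_e-\rho_u$ splits cleanly across $\Pi_{\mathrm{sym}}$ and $\Pi_{\mathrm{sym}}^\perp$, and the positive part is read off without ever computing the non-symmetric blocks of $\rho_u$. This is more economical for the balanced prior $p=1/2$ and avoids both Weingarten and Kostka numbers. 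What the paper's SDP approach buys is the extension to arbitrary priors $p\in[0,1]$ (Theorem~\ref{thm:permtest_trivial_optimal}), where the optimal measurement switches at a threshold $p^*(\mu)$; Holevo--Helstrom still applies there, but the sign analysis on the symmetric block becomes exactly the dual feasibility check the paper performs, so the two approaches essentially coincide once $p\neq 1/2$.

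One point you leave implicit is the reduction from ``unknown $U$'' to ``Haar-random $U$'': you average over $U(d)$ from the outset without justifying that this is the adversary's least-favourable prior. The paper handles this via a minimax argument (Section~\ref{subsec:unknown_haar_minimax}), but the simpler version---that the permutation test, being $U^{\otimes n}$-invariant, achieves the same value against every $q$, while any test satisfies $\min_q v(\Pi,q)\le v(\Pi,q_{\mathrm{Haar}})$---follows immediately from ingredients you already have and would close the loop.
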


Several interesting conclusions can be drawn from Theorem~\ref{thm:inf_PT_optimal}:
\begin{enumerate}
    \item Relaxing the one-sided error requirement does not make the average success probability (averaged over the unequal and equal cases) higher, answering an open question by~\cite{kada2008efficiency}.
    \item Similarly, knowing where the different states are located also does not increase one's capacity to increase the overall success probability.
    \item The permutation test will on most inputs perform much better than the worst-case instances that have $1/n$ soundness error, which are the inputs where only one state is different from all the others.
\end{enumerate}
If $p \neq \frac{1}{2}$, then the permutation test is still optimal for all $p \geq p^*(\mu)$, where $p^*(\mu)\defeq\frac{1}{1+\binom{n}{\mu}}$. For all $p <p^*(\mu)$ the optimal test is to always output `not equal' (Theorem~\ref{thm:permtest_trivial_optimal}).

Our proof is based on an SDP formulation of finding the measurement that maximizes the success probability given the `equal' case with probability $p$. Our SDP considers input states that have been acted upon by a Haar random unitary. This is to some extent too strong an assumption: not only is the basis unknown to the problem solver, the basis is also `unknown to the universe'! In an adversarial setting, where an all-powerful entity could pick any measure over the unitary group to be applied to the input states, the input states would still be unknown but the Haar measure might not be the hardest input distribution to solve. However, we show that this assumption can be made without loss of generality, exploiting a special property of the structure of the permutation test which is the optimal solution to our SDP (Subsection~\ref{subsec:unknown_haar_minimax}).

\subsubsection*{\texorpdfstring{The $G$-test}{The G-test}}
Whilst Theorem~\ref{thm:inf_PT_optimal} fully resolves the quantum state identity problem in terms of determining the optimal information-theoretic measurement, it might for some be still `too blunt of a tool' as its circuit complexity is considerable. In that case, one might be able to resort to other tests with smaller circuit complexity because (i) more is known about the type of inputs or (ii) one is only interested in a sub-optimal performance guarantee. For example, the circle test is optimal for $n$ is prime and a good approximation for arbitrary $n$ when the circle test is preceded by a random (classical) permutation of the input states~\cite{kada2008efficiency}.

Using these ideas, we define a general notion of a so-called \textit{$G$-test}, which does the following.
\begin{enumerate}
    \item Randomly permute the labels of the input states.
    \item Apply the permutation test for a subgroup $G \subseteq \S_n$ (see Algorithm~\ref{alg:perm_G_test}).
\end{enumerate}

\begin{theorem}
\label{thm:Gtest}
Let $G$ be any subgroup of $\S_n$. Then the $G$-test has perfect completeness and soundness
\begin{equation*}
    \mathbb{P}^{G}_{s}(\mu) = 1 - \frac{1}{\binom{n}{\mu}} \sum_{\substack{\lambda \pt_d n}} K_{\lambda,\mu} r^{G}_{\lambda},
\end{equation*}
where $K_{\lambda,\mu}$ are Kostka numbers, $r^{G}_{\lambda}$ is the multiplicity of the trivial irrep of the subgroup $G \subseteq \S_n$ inside the irrep $\lambda$ of the symmetric group $\S_n$. 
\end{theorem}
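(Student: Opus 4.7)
The plan is to analyse the $G$-test through its acceptance projector on the data register. A standard Swap-test-style calculation, identical to the one used for the Permutation test, shows that preparing the uniform superposition $\frac{1}{\sqrt{|G|}}\sum_{g\in G}\ket{g}$ in the ancilla, applying a controlled $P_g$ (where $P_g$ denotes the permutation action of $g\in\S_n$ on $(\C^d)^{\otimes n}$), and accepting iff the ancilla is projected back onto that superposition yields acceptance probability $\bra{\psi}\Pi_G\ket{\psi}$, where
\[
\Pi_G \;\defeq\; \frac{1}{|G|}\sum_{g\in G} P_g
\]
is the projector onto the $G$-invariant subspace of $(\C^d)^{\otimes n}$. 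Completeness is then immediate, since any case (i) input has the form $U^{\otimes n}\ket{\phi}^{\otimes n}$ and hence lies in the symmetric subspace, which every $P_g$ fixes pointwise.

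For soundness, I would first observe that $\Pi_G$ commutes with $U^{\otimes n}$ by Schur--Weyl duality, so the unknown basis $U$ drops out of the acceptance probability. The classical random permutation performed at the start of the test then averages any case (ii) input over the full $\S_n$-orbit of $\ket{\mu}\defeq\ket{1^{\mu_1}2^{\mu_2}\dotsc d^{\mu_d}}$. Because the stabiliser of $\ket{\mu}$ in $\S_n$ is the Young subgroup $\S_{\mu_1}\times\dotsb\times\S_{\mu_d}$ of size $\mu_1!\dotsc\mu_d!$, this orbit consists of $\binom{n}{\mu}$ mutually orthogonal computational basis states, and the averaged input density matrix is $\rho_\mu = \frac{1}{\binom{n}{\mu}}\Pi_{V_\mu}$, where $V_\mu\subseteq(\C^d)^{\otimes n}$ is the $\S_n$-invariant subspace spanned by the orbit and $\Pi_{V_\mu}$ is the orthogonal projector onto it.

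The acceptance probability on case (ii) inputs is therefore
\[
\mathrm{tr}(\Pi_G \rho_\mu) \;=\; \frac{1}{\binom{n}{\mu}}\cdot\frac{1}{|G|}\sum_{g\in G}\chi_{V_\mu}(g),
\]
where $\chi_{V_\mu}$ is the $\S_n$-character of $V_\mu$. The key representation-theoretic step is to recognise $V_\mu$ as the permutation module $M^\mu$ and apply Young's rule, giving the decomposition $V_\mu \cong \bigoplus_{\lambda\pt_d n} K_{\lambda,\mu}\,V_\lambda$ into Specht modules weighted by Kostka numbers (the range $\lambda\pt_d n$ is automatic, since a semistandard tableau of content $\mu$ uses at most $d$ distinct letters and hence its shape has at most $d$ rows). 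Substituting this decomposition, each $\frac{1}{|G|}\sum_{g\in G}\chi_\lambda(g)$ is by definition the multiplicity $r^G_\lambda$ of the trivial $G$-representation inside $V_\lambda|_G$, and the claimed soundness formula follows by taking the complement.

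The main obstacle, in my view, is the clean identification of the averaged case (ii) density matrix with the projector onto the permutation module $M^\mu$ inside $(\C^d)^{\otimes n}$ and the invocation of Young's rule with the correct indexing; this is the only genuinely representation-theoretic ingredient, since once it is in place the rest reduces to combining the $U^{\otimes n}$-invariance of $\Pi_G$ with standard character-theoretic bookkeeping. As a sanity check, the formula recovers Theorem~\ref{thm:inf_PT_optimal} when $G=\S_n$, since then $r^{\S_n}_\lambda = \delta_{\lambda,(n)}$ and $K_{(n),\mu}=1$, and it correctly gives soundness $0$ for $G=\{e\}$ via $\sum_\lambda K_{\lambda,\mu}\dim V_\lambda = \dim M^\mu = \binom{n}{\mu}$.
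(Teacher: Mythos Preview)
Your argument is correct and lands on exactly the same representation-theoretic core as the paper --- Young's rule for the permutation module $M^\mu$ together with the character identity $\frac{1}{|G|}\sum_{g\in G}\chi_\lambda(g)=r^G_\lambda$ --- but you reach it by a more direct route. The paper first Haar-twirls the unequal input via Weingarten calculus (its Lemma~1), obtaining $\tilde{\rho}_{\neq}^{\mu}=\frac{1}{\binom{n}{\mu}}\sum_{\lambda\pt_d n}\frac{K_{\lambda,\mu}}{m_{\lambda,d}}\Pi_\lambda$, and then traces against $\Pi_G$ using $\Tr[\Pi_\lambda\Pi_G]=m_{\lambda,d}\,r^G_\lambda$, so that the Weyl-module dimensions $m_{\lambda,d}$ appear and cancel. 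You instead observe that $[\Pi_G,U^{\otimes n}]=0$ makes the Haar average irrelevant for this particular trace, and work directly with the $\S_n$-twirl $\frac{1}{\binom{n}{\mu}}\Pi_{M^\mu}$ of the computational basis state. This buys you an entirely elementary proof that never invokes Weingarten calculus or the full Schur--Weyl decomposition; the paper's route, by contrast, produces the explicit density operator $\tilde{\rho}_{\neq}^{\mu}$ as a byproduct, which it reuses in its SDP optimality analysis.
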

The derivation of Theorem~\ref{thm:Gtest} can be found in Section~\ref{sec:G_test_perf}, where we also briefly touch upon the cases of $G$ being the cyclic group $\textup{C}_n$ (which captures the RCIR test of~\cite{kada2008efficiency}) as well as an iterated wreath product of the symmetric groups, see Section~\ref{sec:def_wreath_test}.

\subsubsection*{Approximation by the Iterated Swap Tree}
Whilst Theorem~\ref{thm:Gtest} in principle contains all the information to compute the performance of any $G$-test, this might in practice be quite laborious for most subgroups $G$. In particular, we will also consider a test that consists solely of a classical permutation followed by Swap tests in a sort of tree structure, which is captured by the iterated wreath product~\cite{orellana2004rooted,im2018generalized}. This test, which we call the \emph{Iterated Swap Tree} (\hyperref[alg:IST]{IST}), works on any input size that is a power of two and only uses a random classical permutation and $n-1$ Swap tests. It can be viewed as a generalization of the test as proposed in~\cite{Chabaud2018optimal}, with the extra advantage that it no longer needs to have in the unequal case a promise that the different state is located in the first register. 

Using different techniques and ideas, we provide a recursive upper bound on the error probability as a function of the number of different states $h$ and the total number of input states $n$.

\begin{theorem}[Informal, from Theorem~\ref{thm:swap_trees} and Corollary~\ref{cor:IST}]
\label{thm:IST}
For any $n$ that is a power of two the Iterated Swap Tree achieves perfect completeness and has soundness lower bounded by
    \begin{equation}
        \mathbb{P}^{\mathrm{IST}}_{s}(n,h) \geq 1-  \frac{\gamma(h,\log_2(n))}{\binom{n}{h}},
    \end{equation}
    where $\gamma(h,m)$ satisfies a recurrence relation with the boundary conditions $\gamma(0,m) = \gamma(1,m) = 1$. Moreover, for any fixed $h$ we have that $\gamma(h,\log_2(n))$ is a polynomial in $\log_2(n)$ of degree $h-1$, which implies that there exists a $n_0 \in \mathbb{N}$ such that for all $n \geq n_0$ the test achieves the optimal soundness  $\geq 1- 1/n$.
\end{theorem}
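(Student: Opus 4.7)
The plan is to analyze the IST by induction on the tree depth $m = \log_2 n$, combining the effect of the initial random permutation with a recursive analysis of the Swap-test layers. Perfect completeness is immediate: in the equal case every Swap test in the tree acts on two copies of the same state and therefore accepts with probability $1$, regardless of the random permutation.

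For soundness, fix a type-multiplicity vector $\mu \pt n$ with $h$ nonzero parts. The initial uniformly random permutation reduces the analysis to the expected behaviour of the fixed tree. At the bottom layer the $n$ inputs are partitioned into $n/2$ adjacent pairs, and the joint distribution of the types appearing in those pairs is a straightforward hypergeometric sampling from $\mu$. Conditional on all first-layer Swap tests accepting, a direct calculation using the post-measurement form $(\ket{ab} + \ket{ba})/\sqrt{2}$ shows that each surviving register carries either a pure type (if its pair was monochromatic) or a uniform classical mixture of the two types involved (if its pair was bichromatic). By linearity, the full conditional state is an average of pure type-configurations on $n/2$ registers, each of which is itself an IST instance of depth $m-1$ on a (possibly smaller) multiset of types. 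This yields the announced recurrence for $\gamma(h,m)$, with $\gamma(0,m) = \gamma(1,m) = 1$ coming from the trivial instances where all surviving registers share a common type and no further rejection is possible.

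To obtain the polynomial-degree statement I would induct on $h$: the recurrence has the schematic shape $\gamma(h, m) = \gamma(h, m-1) + \sum_{h' < h} c_{h,h'} \, \gamma(h', m-1)$, and solving it by discrete summation bumps the degree by one each time $h$ is incremented, producing a polynomial in $m$ of degree $h-1$. Since $\binom{n}{h}$ grows like $n^{h}$ while $\gamma(h, \log_2 n) = O\of{(\log n)^{h-1}}$, for any fixed $h \geq 2$ the ratio falls below $1/n$ once $n$ is sufficiently large, which gives the optimal $1-1/n$ soundness bound.

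The main obstacle, in my view, is the post-measurement entanglement within each pair after a successful first-layer Swap test: the kept and discarded registers of a pair are entangled, so one has to argue carefully that the analysis of later layers reduces cleanly to pure type-configurations. The cleanest route is to expand the joint post-measurement state into a classical mixture of type-configurations on the surviving registers, absorbing all the entanglement into probabilistic weights; once this is done, the combinatorial bookkeeping of the hypergeometric weights coming from the permutation and the way the type-multiplicity vector propagates through the tree is routine, if a bit tedious.
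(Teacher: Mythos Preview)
Your completeness argument and the last two steps --- the induction showing $\gamma(h,m)$ is a polynomial of degree $h-1$ in $m$, and the comparison of $(\log n)^{h-1}$ against $\binom{n}{h}$ --- are fine and essentially match the paper. The gap is in the soundness recursion.

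First, a protocol issue: the IST does not discard a register after a Swap test. Both registers of a pair survive and are passed together, as a block of size $2$, into the next layer, whose Swap tests act on blocks of size $2^{l}$. There are no ``kept and discarded registers''; the post-measurement state on a pair that accepted is the full symmetric superposition $(\ket{ab}+\ket{ba})/\sqrt{2}$, and both tensor factors continue.

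Second, and more importantly, that state is a coherent superposition, not a classical mixture of type-configurations, and the next-layer Swap test sees the coherence. For instance, if two adjacent pairs both land in the state $(\ket{ab}+\ket{ba})/\sqrt{2}$, the block-Swap test on them accepts with probability $1$; replacing each by the classical mixture $\tfrac{1}{2}\proj{ab}+\tfrac{1}{2}\proj{ba}$ would give acceptance probability $3/4$. So ``absorbing the entanglement into probabilistic weights'' moves the acceptance probability in the wrong direction for the bound you want, and in general does not bound it either way. This is exactly the obstacle you flag, and it is not removed by the decohering step you propose.

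The paper sidesteps the post-measurement state entirely. It never analyses what the blocks look like after accepting; instead it lower-bounds the number of ``possible clicks'' using only classical data: a click is counted at a node of the tree whenever the number of distinguished states in its left subtree differs from that in its right subtree. This is a deliberate undercount --- when the two counts agree the inputs may still be orthogonal, but that click is simply dropped --- and hence gives a valid upper bound on the overall acceptance probability. Crucially, this count depends only on how the $h$ distinguished leaves are distributed by the initial random permutation, so no quantum analysis is needed. The hypergeometric probability of a $(k,h-k)$ split between the two halves, together with the substitution $\gamma(h,m)=\binom{2^m}{h}\alpha(h,m)$, then yields the recurrence
\[
\gamma(h,m)=\sum_{k=0}^{\lfloor h/2\rfloor}\gamma(k,m-1)\,\gamma(h-k,m-1)
\]
directly. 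Note in particular that the summands are products of two $\gamma$'s (one for each half-tree), not linear combinations $\sum_{h'<h} c_{h,h'}\gamma(h',m-1)$ as in your schematic; this does not affect the degree count, but it is what actually comes out of the argument.
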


Our proof is based on the combinatorics of what we call `clicks', which are potential detection locations where the inputs to a Swap test within the Iterated Swap Tree have the possibility of being detected. Since for unequal input states this happens with probability $\frac{1}{2}$, the problem boils down to computing the expectation value $\mathbb{E}_\sigma[\left( \frac{1}{2} \right)^{c(h,m,\sigma)}]$, where $c(h,m,\sigma)$ is a function that gives the total number of clicks given $n = 2^m$ input states of which $h$ are different and a permutation $\sigma$. The fact that we take the expectation value over all permutations comes from the random permutation of the labels we perform, and this is necessary since one can easily show that the test would otherwise have large soundness error on some specific instances (in this case, it would be to have the first and second half of the states different, but identical within the halves themselves).

\section{Preliminaries}
\subsection{Notation}

Let $n \geq 0$ be an integer.
We write $\lambda \pt n$ to mean that $\lambda$ is a \emph{partition} of $n$, i.e.,
$\lambda = (\lambda_1,\dotsc,\lambda_d)$ for some $d$ where
$\lambda_1 + \dotsb + \lambda_d = n$ and
$\lambda_1 \geq \dots \geq \lambda_d \geq 0$.
The \emph{size} of partition $\lambda$ is $\abs{\lambda} = n$.
We say that $\lambda$ has \emph{length} $\ell(\lambda) = k$ if $\lambda_k > 0$ and $\lambda_{k+1} = 0$. We use the notation $\lambda \pt_d n$ to indicate that $\lambda \pt n$ and $\ell(\lambda) \leq d$.
The \emph{Young diagram} of partition $\lambda$ is a collection of $n$ \emph{cells} arranged in $\ell(\lambda)$ rows, with $\lambda_i$ of them in the $i$-th row.

\subsection{Representation theory background}

Broadly speaking, representation theory studies how one can represent general algebraic objects such as groups, associative algebras and Lie algebras as linear operators with matrix multiplication preserving the original algebraic operations. The fundamental concept here is of \emph{irreducible representation} (also simply called "irrep"), i.e.~a representation (also called "module") which does not have (as a vector space) any invariant subspaces under the action of the represented group or algebra. When studying the actions of general groups on a given vector space it is useful to understand how this representation decomposes into irreducible representations of the given group and other relevant algebraic structures.

A canonical example of such an approach is \emph{Schur--Weyl duality} which describes the following situation. Suppose that you study the diagonal action of the unitary group $\mathrm{U}_d$ on the vector space $(\C^d)^{\otimes n}$. It turns out that the commutant of this action is spanned by permutation operators $\sigma$ from the symmetric group $\S_n$ and vice versa: the commutant of $\S_n$ in $(\C^d)^{\otimes n}$ is spanned by $U^{\otimes n}$ where $U \in \mathrm{U}_d$. In turn, we can decompose $(\C^d)^{\otimes n}$ into irreducible representations of unitary and symmetric groups:
\begin{equation}\label{eq:schur_weyl_decomp}
    (\C^d)^{\otimes n} \cong \bigoplus_{\lambda \pt_d n} W_\lambda \otimes V_\lambda,
\end{equation}
where $W_\lambda$ is so-called \emph{Weyl module}, i.e.~an irreducible representation of $\mathrm{U}_d$, and $V_\lambda$ is so-called \emph{Specht module}, i.e.~an irreducible representation of $\S_n$. The dimensions of these irreps we denote by 
\begin{equation}
    d_\lambda \defeq \dim(V_\lambda), \qquad m_{\lambda,d} \defeq \dim(W_\lambda).
\end{equation}
The isomorphism in \cref{eq:schur_weyl_decomp} can be achieved via a unitary $U_{\mathrm{Sch}}$ called \emph{Schur transform} \cite{bch2006quantumschur,HarrowThesis,kirby2018practical,Krovi2019,wills2023generalised,grinko2023gelfandtsetlin,nguyen2023mixed}. In particular, we can introduce the \emph{isotypic projectors} $\Pi_\lambda$ as follows:
\begin{equation}
   U_{\mathrm{Sch}} \Pi_\lambda U_{\mathrm{Sch}}^{\dagger} = \bigoplus_{\mu \pt_d n} \delta_{\mu,\lambda} I_\mu \otimes I_\mu.
\end{equation}

\subsubsection{Weingarten calculus}

One crucial tool which we need is \emph{Weingarten calculus} \cite{collins2006integration,Collins_2022}. It is a set of results on integration with respect to the Haar measure of the unitary group. In particular, we need the following result \cite[Proposition~2.3]{collins2006integration} regarding an integral of an arbitrary matrix $X \in \End((\C^d)^{\otimes n})$:
\begin{equation}
    \int U^{\otimes n} X U^{\dagger \otimes n} dU = \of*{\sum_{\sigma \in \S_n} \Tr[ X \psi(\sigma^{-1})] \psi(\sigma) } \of[\Bigg]{ \frac{1}{n!} \sum_{\substack{\lambda \pt_d n}} \frac{d_\lambda}{m_{\lambda,d}} \Pi_\lambda },
\end{equation}
where we formally defined an action of $\S_n$ on $(\C^d)^{\otimes n}$ which permutes the registers as the representation $\psi : \S_n \rightarrow \End((\C^d)^{\otimes n})$.

\subsection{Semidefinite programming}

\emph{Semidefinite programming} is an important subfield of optimization \cite{handbookSDP} that has numerous applications in quantum information theory \cite{SDPs,watrous}. A typical formulation of a \emph{semidefinite program} (SDP) has the form \cite[Section~1.1]{handbookSDP}:
\begin{equation}
  \begin{aligned}
    f \defeq \max_X \quad & \Tr \of{C\tp X} \quad \textrm{s.t.} \quad \Tr \of{A_i\tp X} = b_i \quad \forall i \in [m], \quad X \succeq 0,
  \end{aligned}
\end{equation}
where $X$ is a Hermitian matrix variable,
$C$ and $A_i$ are constant Hermitian matrices, and
$b_i$ are real constants. For this \emph{primal} semidefinite program there always exists the \emph{dual} semidefinite program:
\begin{equation}
  \begin{aligned}
    f^* \defeq \min_{y \in \mathbb{R}^m} \quad & y\tp b \quad \textrm{s.t.} \quad \sum_{i=1}^m y_i A_i \succeq C.
  \end{aligned}
\end{equation}
There is an important property called \emph{weak duality} which always holds: 
\begin{equation}
    f^* \geq f.
\end{equation}
This property allows to prove the optimality of a candidate feasible solution of the primal SDP by exhibiting a feasible solution of the dual program which matches the value of the primal objective function. This method of proving optimality is used in \Cref{sec:perm_test}.

\section{\texorpdfstring{The $G$-test}{The G-test}}

In this section, we present a generalization of the Swap test to arbitrary subgroups of the symmetric group. For a subgroup $G \subseteq \S_n$, let $F_G$ be a $\abs{G} \times \abs{G}$ unitary operator, which prepares an equal superposition of all elements of the group:
\begin{equation}
    F_G \ket{0} = \ket{+}, \qquad \ket{+} \defeq \frac{1}{\sqrt{\abs{G}}}\sum_{g \in G} \ket{g}.
\end{equation}
We also define the controlled permutation operation $C$ as the map that implements
\begin{align}
  C : \ket{\sigma} \ket{\psi_1} \ket{\psi_2} \dots \ket{\psi_n} \mapsto \ket{\sigma} \ket{\psi_{\sigma^{-1}(1)}} \ket{\psi_{\sigma^{-1}(2)}} \dots \ket{\psi_{\sigma^{-1}(n)}}.
\end{align}
Then we can describe a general algorithm for the quantum state identity problem as follows:

\begin{custalgo}[Permutation test for a subgroup $G \subseteq \S_n$ ($G$-test)] \label{alg:permtest_subgroup}
\noindent \textbf{Input}: $n$ quantum states $\ket{\psi_1} \otimes \ket{\psi_2} \otimes \dots \otimes \ket{\psi_n}$, each of local dimension $d$.\\
\begin{enumerate}
    \item Start with the initial state $\ket{0} \ket{\psi_1} \ket{\psi_2} \dots \ket{\psi_n}$,
    where we denote the $|G|$-dimensional register containing $\ket{0}$ as $P$ and the register with the input states as $R$.
    \item Apply $F_G$ to register $P$.
    \item Apply $C$ to register $R$, controlled by register $P$. 
    \item Apply $F_G^\dagger$ to $P$ and measure register $P$. If the outcome is $0$, return \textit{equal}, otherwise return \textit{not equal}.
\end{enumerate}
\label{alg:perm_G_test}
\end{custalgo}

If $G = \S_n$ then the corresponding test is known as the \emph{permutation test}, for $G = \textup{C}_n$ the \emph{circle test}~\cite{kada2008efficiency} and for $G = \S_2$ the \emph{Swap test}~\cite{buhrman2001fingerprinting}. If the final measurement outcome is $\ket{0}$, the circuit has projected the input state onto the subspace containing all permutations $\sigma \in G$, a technique that was first proposed in~\cite{barenco1997stabilization}. 
The explicit description of quantum circuits for the permutation and circle tests was provided in \cite{bradshawCycleIndexPolynomials2023a}, which builds on the results of \cite{labordeTestingSymmetryQuantum2023a}. 

In the next subsection, we will focus solely on $G = \S_n$, and come back to the general setting of an arbitrary subgroup $G$ in Section~\ref{sec:G_test_perf}.

\subsection{The Permutation test}
\label{sec:perm_test}

In this section, we study the Permutation test, i.e. $G = \S_n$. We approach the problem using techniques from semidefinite optimization and representation theory.

\subsubsection{The SDP formulation of the optimal measurement for Haar-random inputs}
To prove that in general that the permutation test gives the best probability of success for pure state equality testing we write the problem of finding the optimal measurement as an SDP. We consider two cases of the inputs. One where the order of the input in the unequal case is known, and one where the inputs are randomly permuted. We then show that there exists a feasible solution to the dual that coincides with the success probability of the permutation test for the case where the input order is known. Furthermore, we see that the permutation test is agnostic about the order of the inputs and achieves the same success probability on the randomly permuted input as on the ordered input. Therefore knowing the ordering in advance does not increase the optimal success probability. We define these different cases for equal or unequal qudit inputs as follows:
\begin{align}
    \rho_{\neq}^{\mu} &\defeq \int U^{\otimes n} \ket{1^{\mu_1} 2^{\mu_2} \dotsc d^{\mu_d}} \bra{1^{\mu_1} 2^{\mu_2} \dotsc d^{\mu_d}} (U^\dagger)^{\otimes n} dU, \label{def:rho_neq_mu} \\
    \tilde{\rho}_{\neq}^{\mu} &\defeq \frac{1}{n!}\sum_{\pi \in \S_n} \psi(\pi)\rho_{\neq}^{\mu} \psi(\pi^{-1}), \label{def:tilde_rho_neq_mu} \\
    \rho_{=}^{n} &\defeq \int U^{\otimes n} \ket{1^n} \bra{1^n} (U^\dagger)^{\otimes n} dU, \label{def:tilde_rho_n}
\end{align}
where $\mu \pt n$ is some partition of $n$ with $\mu_2>0$,\footnote{The condition $\mu_2>0$ is assumed automatically everywhere in this paper for the state $\rho^{\mu}_{\neq}$. On the other hand, the state $\rho^{n}_{=}$ corresponds to $\mu = (n)$. Therefore, one can think about subscripts as follows: "$=$" means $\mu_2=0$, while "$\neq$" means $\mu_2 > 0$.} and we integrate with respect to the Haar measure on the unitary group $\mathbb{U}(d)$.\footnote{Without loss of generality, we consider the Haar-random unitaries. The reason for that is explained in \Cref{subsec:unknown_haar_minimax}.} The task is now to find a POVM set that discriminates optimally between the two. We can exactly find the optimal discrimination using an SDP \cite{boyd2004convex}. The SDP that optimizes over all POVMs for the ordered inputs can be written as follows:

\begin{minipage}{0.45\textwidth}
\begin{align*}
    &\textbf{Primal Program} \\
    \textbf{maximize: } &p \Tr[ \Pi_= \rho_{=}^{n}] + (1-p) \Tr[ \Pi_{\neq} \rho_{\neq}^{\mu} ] \\
    \textbf{subject to: } &\Pi_= + \Pi_{\neq} = \id_{d^{n}} \\
    &\Pi_= \succeq 0, \\
    &\Pi_{\neq} \succeq 0.\\
\end{align*}
\end{minipage}
\begin{minipage}{0.5\textwidth}
\begin{align} \label{def:SDP}
    &\textbf{Dual Program} \nonumber \\
    \textbf{minimize: } &\Tr[Y] \nonumber \\
    \textbf{subject to: } &Y - p \rho_{=}^{n} \succeq 0,\\
    &Y - (1 - p)\rho_{\neq}^{\mu} \succeq 0,\nonumber \\
    &Y \in \text{Herm}((\mathbb{C}^d)^{\otimes n}) .\nonumber \\ \nonumber
\end{align}
\end{minipage}

\noindent To solve the above SDP we would need the following crucial lemma.

\begin{lemma} \label{lem:main_lemma}
Let $\mu = (\mu_1, \dotsc, \mu_d)$ be some partition of $n$. Define 
    \begin{align*}
        \rho_{\neq}^{\mu} &\defeq \int U^{\otimes n} \ket{1^{\mu_1} 2^{\mu_2} \dotsc d^{\mu_d}} \bra{1^{\mu_1} 2^{\mu_2} \dotsc d^{\mu_d}} (U^\dagger)^{\otimes n} dU, \qquad
        \tilde{\rho}_{\neq}^{\mu} \defeq \frac{1}{n!}\sum_{\pi \in \S_n} \psi(\pi)\rho_{\neq}^{\mu} \psi(\pi^{-1}).
    \end{align*}
    Then
    \begin{align}
        \rho_{\neq}^{\mu}
        &= \frac{1}{\binom{n}{\mu}} \sum_{\substack{\lambda \pt_d n}} \frac{d_\lambda}{m_{\lambda,d}} \of*{\Pi_{(\mu_1)} \otimes \dotsc \otimes \Pi_{(\mu_d)}} \Pi_\lambda, \qquad
        \tilde{\rho}_{\neq}^{\mu} = \frac{1}{\binom{n}{\mu}} \sum_{\substack{\lambda \pt_d n}} \frac{K_{\lambda \mu}}{m_{\lambda,d}} \Pi_\lambda,
    \end{align}
    where $\binom{n}{\mu} \defeq \frac{n!}{\mu_1!\dotsc\mu_d!}$ is a multinomial coefficient, $K_{\lambda \mu}$ is the Kostka number and $\Pi_\lambda \in \End \of{(\C^d)^{\otimes n}}$ is the isotypic projector onto $\lambda$ irrep in the tensor representation of the symmetric group $\S_{n}$. In particular, $\Pi_{(n)} = \frac{1}{n!}\sum_{\pi \in \S_{n}} \psi(\pi)$ and $\Pi_{(\mu_i)} = \frac{1}{\mu_i!}\sum_{\pi \in \S_{\mu_i}} \psi(\pi)$, where $\S_{\mu_i}$ is understood as a subgroup of $\S_{n}$ acting on the corresponding systems from the partition $\mu$ of $n$. In particular, for $\mu = (n)$ we have:
    \begin{equation}
        \rho_{=}^{n} \defeq \rho_{\neq}^{(n)} = \tilde{\rho}_{\neq}^{(n)} = \frac{\Pi_{(n)}}{\binom{n+d-1}{n}} ,
    \end{equation}
\end{lemma}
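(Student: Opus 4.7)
The plan is to apply the Weingarten identity from the preliminaries to compute $\rho_{\neq}^{\mu}$ directly, and then to symmetrize the resulting formula over $\S_n$ using representation theory of the symmetric group to get $\tilde{\rho}_{\neq}^{\mu}$.

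For the first identity, I would set $X = \ket{v}\bra{v}$ with $v \defeq 1^{\mu_1} 2^{\mu_2} \dotsc d^{\mu_d}$. The only non-routine step is evaluating $\sum_{\sigma \in \S_n}\Tr[X\psi(\sigma^{-1})]\psi(\sigma)$. The matrix element $\bra{v}\psi(\sigma^{-1})\ket{v}$ equals $1$ exactly when $\sigma$ permutes positions within each constant block of $v$, i.e.~when $\sigma$ lies in the Young subgroup $\S_\mu \defeq \S_{\mu_1}\times\dotsb\times\S_{\mu_d}$, and is $0$ otherwise. Hence the sum collapses to $\sum_{\sigma \in \S_\mu}\psi(\sigma) = \mu_1!\dotsm\mu_d!\,(\Pi_{(\mu_1)}\otimes\dotsc\otimes\Pi_{(\mu_d)})$, and substituting into the Weingarten formula and using $\binom{n}{\mu} = n!/(\mu_1!\dotsm\mu_d!)$ yields the claimed expression for $\rho_{\neq}^{\mu}$.

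For the second identity, I would exploit that $\Pi_\lambda$ is central in the $\S_n$-group algebra and hence commutes with every $\psi(\pi)$. Conjugating the formula for $\rho_{\neq}^\mu$ by $\psi(\pi)$ and averaging over $\pi \in \S_n$ therefore gives
\begin{equation*}
    \tilde{\rho}_{\neq}^{\mu} = \frac{1}{\binom{n}{\mu}}\sum_{\lambda \pt_d n}\frac{d_\lambda}{m_{\lambda,d}}\, \Pi_\lambda\, \tilde{P}_\mu, \qquad \tilde{P}_\mu \defeq \frac{1}{n!}\sum_{\pi \in \S_n}\psi(\pi)(\Pi_{(\mu_1)}\otimes\dotsc\otimes\Pi_{(\mu_d)})\psi(\pi^{-1}).
\end{equation*}
The operator $\tilde{P}_\mu$ is now central in the $\S_n$-algebra, so by Schur--Weyl it has the form $\bigoplus_\lambda I_{W_\lambda}\otimes c_\lambda I_{V_\lambda}$. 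To pin down $c_\lambda$ I would take the trace on the $V_\lambda$-factor: averaging preserves the trace, and the trace of the original projector $\Pi_{(\mu_1)}\otimes\dotsc\otimes\Pi_{(\mu_d)}$ restricted to $V_\lambda$ is the dimension of the $\S_\mu$-invariant subspace in the Specht module, which by Young's rule (equivalently, Frobenius reciprocity applied to $\mathrm{Ind}_{\S_\mu}^{\S_n}\mathbf{1} \cong \bigoplus_\lambda K_{\lambda\mu} V_\lambda$) equals the Kostka number $K_{\lambda\mu}$. Hence $c_\lambda = K_{\lambda\mu}/d_\lambda$, so $\Pi_\lambda \tilde{P}_\mu = (K_{\lambda\mu}/d_\lambda)\Pi_\lambda$ and the $d_\lambda$-factors cancel to produce the stated formula. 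The special case $\mu = (n)$ is then immediate, since $\Pi_{(\mu_1)}\otimes\dotsc\otimes\Pi_{(\mu_d)} = \Pi_{(n)}$, $K_{\lambda,(n)} = \delta_{\lambda,(n)}$, and $m_{(n),d} = \binom{n+d-1}{n}$.

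The main obstacle is a bookkeeping subtlety in identifying $c_\lambda$: one must remember that under Schur--Weyl the multiplicity space $W_\lambda$ has dimension $m_{\lambda,d}$ while the Specht module $V_\lambda$ has dimension $d_\lambda$, and because $\tilde{P}_\mu$ acts trivially on $W_\lambda$ the relevant trace is the one over $V_\lambda$ alone, producing $K_{\lambda\mu}$ rather than $m_{\lambda,d}K_{\lambda\mu}$. Getting this normalization right is precisely what makes the $d_\lambda$-factors cancel cleanly at the end.
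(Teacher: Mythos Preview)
Your proposal is correct and follows essentially the same approach as the paper: apply the Weingarten formula with $X=\ket{v}\bra{v}$, collapse the sum to the Young subgroup $\S_\mu$, and then obtain $\tilde{\rho}_{\neq}^{\mu}$ by symmetrizing and invoking Schur's lemma together with Young's rule to produce the Kostka numbers. The paper states the second step more tersely (``Schur's lemma, the definition of the permutation module and Kostka numbers coming from Young's Rule''), whereas you spell out explicitly how $c_\lambda = K_{\lambda\mu}/d_\lambda$ arises from the trace on the Specht factor, but the argument is the same.
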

\begin{proof} 
Let $\chi^\lambda$ be the character of the irrep $\lambda$ of the symmetric group $\S_n$ and $s_{\lambda}$ is the Schur polynomial in $d$ variables, i.e. it is the character of the unitary group $
\mathbb{U}(d)$. The dimensions of the irreps of symmetric and unitary groups are $d_{\lambda} = \chi^\lambda(e)$ and $m_{\lambda,d} = s_{\lambda}(1,\dotsc,1)$. Let $\psi : \CS_n \rightarrow \mathrm{End}\of{(\C^d)^{\otimes n}}$ be the tensor representation of the symmetric group $\S_n$ acting on $(\C^d)^{\otimes n}$. Using the Weingarten calculus \cite[Proposition~2.3]{collins2006integration} we get the following:
    \begin{align}
        \rho_{\neq}^{\mu} &= \of*{\sum_{\sigma \in \S_n} \Tr[ \ket{1^{\mu_1} \dotsc d^{\mu_d}}\bra{1^{\mu_1} \dotsc d^{\mu_d}} \psi(\sigma^{-1})] \psi(\sigma) } \of[\Bigg]{ \frac{1}{n!} \sum_{\substack{\lambda \pt_d n}} \frac{d_\lambda}{m_{\lambda,d}} \Pi_\lambda } \\
        &= \of[\Bigg]{\sum_{\sigma \in S_{\mu_1} \times \dotsc \times S_{\mu_d}} \psi(\sigma) } \of[\Bigg]{ \frac{1}{n!} \sum_{\substack{\lambda \pt_d n}} \frac{d_\lambda}{m_{\lambda,d}} \Pi_\lambda }
        \\
        &=  \of[\bigg]{ \mu_1!\dotsc\mu_d! \, \Pi_{(\mu_1)} \otimes \dotsc \otimes \Pi_{(\mu_d)} } \of[\Bigg]{ \frac{1}{n!} \sum_{\substack{\lambda \pt_d n}} \frac{d_\lambda}{m_{\lambda,d}} \Pi_\lambda } \\
        &= \frac{\mu_1!\dotsc\mu_d!}{n!} \sum_{\substack{\lambda \pt_d n}} \frac{d_\lambda}{m_{\lambda,d}} \of*{\Pi_{(\mu_1)} \otimes \dotsc \otimes \Pi_{(\mu_d)}} \Pi_\lambda
    \end{align}
Therefore, the twirling of $\rho_{\neq}^{\mu}$ with respect to the symmetric group $\S_n$ is
    \begin{align}
        \tilde{\rho}_{\neq}^{\mu} &= \frac{1}{n!}\sum_{\pi \in \S_n} \psi(\pi) \rho_{\neq}^{\mu} \psi(\pi^{-1}) = \frac{\mu_1!\dotsc\mu_d!}{n!} \sum_{\substack{\lambda \pt_d n}} \frac{K_{\lambda \mu}}{m_{\lambda,d}} \Pi_\lambda,
    \end{align}
where we used Schur's lemma, the definition of the \emph{permutation module} and \emph{Kostka numbers} $K_{\lambda \mu}$ coming from the so-called Young’s Rule \cite{Sagan}. In particular, for $\mu = (n)$ we have
\begin{equation}
    \rho_{=}^{n} = \rho_{\neq}^{(n)} = \tilde{\rho}_{\neq}^{(n)} = \frac{\Pi_{(n)}}{\binom{n+d-1}{n}} ,
\end{equation}
where $\Pi_{(n)}$ is the projector onto the symmetric subspace of $n$ qudits, $(n)$ denotes a Young diagram corresponding to the symmetric subspace and $m_{(n),d} = \binom{n+d-1}{n}$ \cite{harrow2013church}. 
\end{proof}

\subsubsection{Unknown is the same as Haar-random}
\label{subsec:unknown_haar_minimax}
Previous section formulates discrimination between $\rho^{n}_{=}$ and $\rho^{\mu}_{\neq}$ via the SDP~\eqref{def:SDP},
where $\rho^{\mu}_{\neq}$ and $\rho^{n}_{=}$ are Haar-twirled states, cf.~\cref{def:rho_neq_mu,def:tilde_rho_neq_mu,def:tilde_rho_n}.
In the definition of
$\QSI^p_{\mu}$, however, the unitary $U$ is only \emph{unknown}, meaning that an adversary may choose an arbitrary probability measure $q$ over $U(d)$ and the tester must choose a measurement without knowing $q$.

To model this, consider POVM $\{\Pi_{=},\Pi_{\neq}=\id-\Pi_{=}\}$ and adversary measure $q$, with objective function to be optimized
\begin{equation}
g(\Pi_{=},q)
\defeq p\,\Tr[\Pi_{=}\rho^{n,(q)}_{=}] + (1-p)\,\Tr[(\id-\Pi_{=})\rho^{\mu,(q)}_{\neq}],
\end{equation}
where 
\begin{align}
    \rho^{n,(q)}_{=} &\defeq \int U^{\otimes n}|1^n\rangle\!\langle 1^n|(U^\dagger)^{\otimes n}\,dq(U),\\
    \rho^{\mu,(q)}_{\neq} &\defeq \int U^{\otimes n}|1^{\mu_1}2^{\mu_2}\cdots d^{\mu_d}\rangle\!\langle1^{\mu_1}2^{\mu_2}\cdots d^{\mu_d}|(U^\dagger)^{\otimes n}\,dq(U).
\end{align}
Let $X\defeq\{\Pi_{=} : 0\preceq \Pi_{=}\preceq \id\}$ and let $Q$ be the set of Borel probability measures on $U(d)$.
Then $X$ is convex and compact, $Q$ is convex and weak-$*$ compact. Moreover, $g(\Pi_{=},q)$ is affine in each argument: in particular, it is concave in $\Pi_{=}$ for each fixed $q$ and convex in $q$ for each fixed $\Pi_{=}$. Therefore, by von Neumann--Sion minimax theorem \cite{sion1958},
\begin{equation}\label{eq:sion_minimax}
\sup_{\Pi_{=}\in X}\inf_{q\in Q} g(\Pi_{=},q) \;=\; \inf_{q\in Q}\sup_{\Pi_{=}\in X} g(\Pi_{=},q).
\end{equation}

We now identify a least favourable $q$ using symmetry. For any $\Pi_{=}$ define its Haar-twirl
\begin{equation}
\overline{\Pi}_{=}\defeq\int_{V\sim \mathrm{Haar}} V^{\otimes n}\Pi_{=}V^{\otimes n\dagger}\,dV,
\qquad \overline{\Pi}_{\neq}\defeq I -\overline{\Pi}_{=}.
\end{equation}
For Haar-twirled hypotheses, $\rho^{n,(\mathrm{Haar})}_{=}$ and $\rho^{\mu,(\mathrm{Haar})}_{\neq}$ commute with all $V^{\otimes n}$,
hence $g(\overline{\Pi}_{=},q_{\mathrm{Haar}})=g(\Pi_{=},q_{\mathrm{Haar}})$. Thus there exists an optimal tester for
$q_{\mathrm{Haar}}$ that is $U(d)$-invariant, i.e. $V^{\otimes n}\Pi^\star_{=}V^{\otimes n\dagger}=\Pi^\star_{=}$ for all $V$.

For such an invariant $\Pi^\star_{=}$, for any fixed pure state $\ketbra{\phi}$ the quantity
$\Tr[\Pi^\star_{=}\,U^{\otimes n}\ketbra{\phi}(U^\dagger)^{\otimes n}]$ is independent of $U$ (by the invariance $V^{\otimes n}\Pi^\star_{=}V^{\otimes n\dagger}=\Pi^\star_{=}$), so averaging over any measure $q$
yields $g(\Pi^\star_{=},q)=g(\Pi^\star_{=},q_{\mathrm{Haar}})$ for all $q$. Consequently, using that $\Pi^\star_{=}$ is optimal for $q_{\mathrm{Haar}}$ by construction,
\begin{equation}
\inf_{q\in Q}\sup_{\Pi_{=}\in X} g(\Pi_{=},q)
\;\ge\; \inf_{q\in Q} g(\Pi^\star_{=},q)
\;=\; g(\Pi^\star_{=},q_{\mathrm{Haar}})
\;=\; \sup_{\Pi_{=}\in X} g(\Pi_{=},q_{\mathrm{Haar}}),
\end{equation}
while the reverse inequality holds trivially by choosing $q=q_{\mathrm{Haar}}$. Hence
\begin{equation}\label{eq:haar_least_favourable}
\inf_{q\in Q}\sup_{\Pi_{=}\in X} g(\Pi_{=},q)
\;=\;
\sup_{\Pi_{=}\in X} g(\Pi_{=},q_{\mathrm{Haar}}).
\end{equation}
Combining \eqref{eq:sion_minimax} and \eqref{eq:haar_least_favourable}, the adversarial-unknown model has the same value as
the Haar-random model, i.e. without loss of generality we can Haar-twirl the inputs in $\QSI_\mu^{p}$.

The same minimax-and-symmetry argument applies to $\widetilde{\QSI}_\mu^{p}$: the objective is affine in the distribution over permutations, and averaging over $S_n$ allows one to assume the unknown permutation is uniformly random.

\subsubsection{\texorpdfstring{Warm-up: optimality for $\mu = (n-h,h)$ with $p=1/2$}{Warm-up: optimality for mu = (n-h,h) with p=1/2}}

Next, consider the most common assumptions $\mu = (n-h,h)$ with $p=1/2$. The $\QSI^p_n$ problem has then the following solution:

\begin{theorem} \label{thm:permtest_optimal_1/2_binary}
    Given $n$ systems and a promise $\mu = (n-h,h)$ with $n/2 \geq h \geq 1$, $p=1/2$, the permutation test, defined by $\Pi_{=} \defeq \Pi_{(n)}$, is optimal and has an average success probability equal 
    \begin{equation}
        \mathbb{P}_{succ}(n,h) = 1 - \frac{1}{2 \binom{n}{h}}.
    \end{equation}
    One-sided success probabilities (completeness and soundness respectively) are 
    \begin{equation}
        \mathbb{P}_{c}(n,h) = 1, \qquad \mathbb{P}_{s}(n,h) = 1 - \frac{1}{\binom{n}{h}}.
    \end{equation}
\end{theorem}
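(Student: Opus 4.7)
The plan is to prove optimality by the standard SDP weak-duality recipe: exhibit a feasible primal solution (the permutation test) achieving some value $v$, then exhibit a feasible dual $Y$ with $\Tr[Y]=v$. The primal value computation is essentially bookkeeping using Lemma~\ref{lem:main_lemma}: taking $\Pi_{=}=\Pi_{(n)}$, completeness equals $\Tr[\Pi_{(n)}\rho_{=}^{n}] = \Tr[\Pi_{(n)}^2]/m_{(n),d} = 1$, and for soundness we observe that $\Pi_{(n)}$ is $\S_n$-invariant so $\Tr[\Pi_{(n)}\rho_{\neq}^{\mu}] = \Tr[\Pi_{(n)}\tilde\rho_{\neq}^{\mu}]$; using Lemma~\ref{lem:main_lemma} together with $\Pi_{(n)}\Pi_{\lambda}=\delta_{\lambda,(n)}\Pi_{(n)}$ and the fact $K_{(n),\mu}=1$, this collapses to $1/\binom{n}{h}$. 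Thus the primal objective equals $\tfrac{1}{2}(1+1-1/\binom{n}{h}) = 1 - \frac{1}{2\binom{n}{h}}$.

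For the dual, the natural guess (motivated by complementary slackness: the tight constraint on the range of $\Pi_{(n)}$ should be $Y = \tfrac{1}{2}\rho_{=}^n$, and on its complement $Y = \tfrac{1}{2}\rho_{\neq}^\mu$) is
\begin{equation*}
    Y \;=\; \tfrac{1}{2}\of*{1 - \tfrac{1}{\binom{n}{h}}}\rho_{=}^{n} \;+\; \tfrac{1}{2}\,\rho_{\neq}^{\mu}.
\end{equation*}
Its trace is immediately $1 - \frac{1}{2\binom{n}{h}}$, matching the primal. The key structural fact I will need is that $\rho_{\neq}^{\mu}$ and $\Pi_{(n)}$ commute, with $\Pi_{(n)}\rho_{\neq}^{\mu}=\frac{1}{\binom{n}{h}}\rho_{=}^{n}$. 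This follows from $\rho_{\neq}^{\mu}$ being a Haar-twirl, hence $U(d)$-invariant, so by Schur--Weyl it is block-diagonal along the isotypic decomposition $\bigoplus_\lambda W_\lambda\otimes V_\lambda$ and acts as $I_{W_\lambda}\otimes N_\lambda$ on each block; in particular on the $\lambda=(n)$ block it is a scalar times $\Pi_{(n)}$, with scalar fixed by the trace computation above.

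With this, dual feasibility is checked block by block. On the symmetric subspace, both $\rho_{=}^{n}$ and $\rho_{\neq}^{\mu}$ act as $\tfrac{1}{m_{(n),d}}I$ and $\tfrac{1}{\binom{n}{h}m_{(n),d}}I$ respectively, so $Y - \tfrac{1}{2}\rho_{=}^{n}$ vanishes identically there, and $Y-\tfrac{1}{2}\rho_{\neq}^{\mu}=\tfrac{1}{2}(1-\tfrac{1}{\binom{n}{h}})\rho_{=}^{n}\succeq 0$. On the orthogonal complement of the symmetric subspace, $\rho_{=}^{n}=0$, so $Y-\tfrac{1}{2}\rho_{=}^{n}=\tfrac{1}{2}\rho_{\neq}^{\mu}\succeq 0$, and $Y-\tfrac{1}{2}\rho_{\neq}^{\mu}=0$ there as well. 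Hence $Y$ is dual-feasible and by weak duality the permutation test attains the optimum.

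The only part that is not completely routine is the block-wise argument, which hinges on noticing that $\Pi_{(n)}$ lies in the commutant of both $U(d)$ and $\S_n$, hence commutes with the $U(d)$-invariant operator $\rho_{\neq}^{\mu}$; once this is in place, PSD verification is one line per block. The completeness/soundness assertions then follow directly from the primal computation, and averaging with $p=1/2$ recovers the stated $\mathbb{P}_{\mathrm{succ}}$.
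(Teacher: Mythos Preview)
Your proof is correct and follows essentially the same approach as the paper: the same primal solution $\Pi_{=}=\Pi_{(n)}$, the same dual certificate $Y=\tfrac{1}{2}\rho_{\neq}^{\mu}+\tfrac{1}{2}\bigl(1-\tfrac{1}{\binom{n}{h}}\bigr)\rho_{=}^{n}$, and the same key identity $\Pi_{(n)}\rho_{\neq}^{\mu}=\tfrac{1}{\binom{n}{h}}\rho_{=}^{n}$. The only cosmetic difference is in verifying $Y-\tfrac{1}{2}\rho_{=}^{n}\succeq 0$: the paper expands $\rho_{\neq}^{\mu}$ via Lemma~\ref{lem:main_lemma} and argues the remaining sum $\sum_{\lambda\neq(n)}\tfrac{d_\lambda}{m_{\lambda,d}}(\Pi_{(n-h)}\otimes\Pi_{(h)})\Pi_\lambda$ is a positive combination of mutually orthogonal projectors, whereas you bypass this by observing that $\rho_{\neq}^{\mu}$ commutes with $\Pi_{(n)}$ and hence its restriction to $(\id-\Pi_{(n)})$ is automatically PSD --- a slightly cleaner route that avoids invoking the full explicit formula.
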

\begin{proof}
Note that the permutation test is always correct on equal inputs, meaning that the completeness probability is
\begin{equation}
    \mathbb{P}_{c}(n,h) = \Tr[\Pi_= \rho_{=}^{n}] =  \Tr[\Pi_{(n)} \rho_{=}^{n}] = 1.
\end{equation}
It is only sometimes wrong on orthogonal inputs, the soundness probability is
\begin{equation}
    \mathbb{P}_{s}(n,h) = \Tr[\Pi_{\neq} \rho_{\neq}^{n,h}] = 1 - \Tr[\rho_{\neq}^{n,h}\Pi_{(n)}] = 1 - \frac{1}{\binom{n}{h}},
\end{equation}
where we used \Cref{lem:main_lemma} and $\rho_{\neq}^{n,h} \defeq \rho_{\neq}^{\mu}$ for $\mu = (n-h,h)$.
Therefore the permutation test is a feasible solution to the primal SDP (\ref{def:SDP}) with the average success probability
\begin{equation}
    \mathbb{P}_{succ}(n,h) = 1 - \frac{1}{2 \binom{n}{h}},
\end{equation}which thus lower-bounds the SDP. We claim that the permutation test is the optimal solution for the SDP. To show this, it suffices to find a feasible solution to the dual problem that attains the value of the permutation test. 

\noindent As an educated guess, we will show that
\begin{align}
    Y \defeq \frac{1}{2} \rho_{\neq}^{n,h} + \frac{1}{2}\left( 1 - \frac{1}{\binom{n}{h}} \right) \rho_{=}^{n}
\end{align}
is a feasible solution to the dual SDP (\ref{def:SDP}). Indeed $Y$ is Hermitian, and trivially
\begin{align}
    Y - \frac{1}{2}\rho_{\neq}^{n,h} = \frac{1}{2}\left( 1 - \frac{1}{\binom{n}{h}} \right) \rho_{=}^{n} \succeq 0
\end{align}
Now we need to show that
\begin{align} \label{equalsdpcondition}
    Y - \frac{1}{2}\rho_{=}^{n} = \frac{1}{2} \rho_{\neq}^{n,h} - \frac{1}{2\binom{n}{h}} \rho_{=}^{n} \succeq 0.
\end{align}

\noindent We can use \Cref{lem:main_lemma} for the case $\mu = (n-h,h)$ with $n/2 \geq h$ to deduce that
\begin{align}
    \rho_{\neq}^{n,h} &= \frac{1}{\binom{n}{h}} \sum_{\substack{\lambda \pt_d n}} \frac{d_\lambda}{m_{\lambda,d}} \of*{\Pi_{(n-h)} \otimes \Pi_{(h)}} \Pi_\lambda \\
    &= \frac{1}{\binom{n}{h}} \frac{1}{\binom{n+d-1}{n}} \Pi_{(n)} + \frac{1}{\binom{n}{h}} \sum_{\substack{\lambda \vdash_d n \\ \lambda \neq (n)}} \frac{d_\lambda}{m_{\lambda,d}} \of*{\Pi_{(n-h)} \otimes \Pi_{(h)}} \Pi_\lambda,
\end{align}
where we used the fact $\of*{\Pi_{(n-h)} \otimes \Pi_{(h)}} \Pi_{(n)} = \Pi_{(n)}$ and $m_{(n),d} = \binom{n+d-1}{n}$. Therefore,
\begin{equation}
    Y - \frac{1}{2}\rho_{=}^{n} = \frac{1}{2} \rho_{\neq}^{n,h} - \frac{1}{2\binom{n}{h}} \rho_{=}^{n} = \frac{1}{2\binom{n}{h}} \sum_{\substack{\lambda \vdash_d n \\ \lambda \neq (n)}} \frac{d_\lambda}{m_{\lambda,d}} \of*{\Pi_{(n-h)} \otimes \Pi_{(h)}} \Pi_\lambda \succeq 0,
\end{equation}
since all $\of*{\Pi_{(n-h)} \otimes \Pi_{(h)}} \Pi_\lambda$ are mutually orthogonal projectors (the isotypic projectors $\Pi_\lambda$ commute with $\Pi_{(n-h)} \otimes \Pi_{(h)}$, so the product of these two projectors is again a projector, and distinct $\lambda$ give orthogonal supports). 
Therefore, all the constraints of the dual SDP (\ref{def:SDP}) are satisfied and the objective value is the same as for the primal SDP. 
Therefore from weak SDP duality, it follows that the permutation test is optimal. 
\end{proof}

The above proof shows an optimality of the permutation test in a relaxed \emph{two-sided error} setting, thus solving an open problem from~\cite{kada2008efficiency}. 

\subsubsection{The general setting}
Now we consider a general situation: an arbitrary $p \in [0,1]$ and an arbitrary $\mu \pt n$.

\begin{theorem} \label{thm:permtest_trivial_optimal}
   For $\hyperref[def:QSI]{\QSI_\mu^{p}}$, given $n$ systems and a partition $\mu \pt n$, we define $p^*(\mu) \defeq \frac{1}{1+\binom{n}{\mu}}$. Then the primal SDP (\ref{def:SDP}) has the following optimal solution:
   \begin{equation} \label{def:opt_pvm_general}
       \Pi_= = 
       \begin{cases}
           \Pi_{(n)} &\text{ if } p^*(\mu) \leq p \leq 1, \\
           0 &\text{ if } 0 \leq p \leq p^*(\mu), 
       \end{cases}
   \end{equation}
   with average success probability
    \begin{equation} \label{def:opt_value}
        \mathbb{P}_{succ}(\mu,p) = \begin{cases}
        1 - \frac{1-p}{\binom{n}{\mu}}  &\text{ if } p^*(\mu) \leq p \leq 1, \\
        1 - p &\text{ if } 0 \leq p \leq p^*(\mu). 
        \end{cases}
    \end{equation}
    One-sided success probabilities (completeness and soundness respectively) are 
    \begin{align}
        \mathbb{P}_{c}(\mu,p) &= \begin{cases}
        1 & \text{ if } p^*(\mu) \leq p \leq 1, \\
        0 & \text{ if } 0 \leq p \leq p^*(\mu), 
        \end{cases} \qquad
        \mathbb{P}_{s}(\mu,p) = \begin{cases}
        1 - \frac{1}{\binom{n}{\mu}}& \text{ if } p^*(\mu) \leq p \leq 1, \\
        1& \text{ if } 0 \leq p \leq p^*(\mu).
        \end{cases}
    \end{align}
\end{theorem}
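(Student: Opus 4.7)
The plan is to extend the SDP weak-duality strategy already used in the warm-up \Cref{thm:permtest_optimal_1/2_binary} from the case $\mu = (n-h,h)$, $p=1/2$ to arbitrary $\mu \pt n$ and $p \in [0,1]$. The new feature is a phase transition at $p^*(\mu) = 1/(1 + \binom{n}{\mu})$: the permutation test $\Pi_= = \Pi_{(n)}$ wins above it, while the trivial ``always output not equal'' measurement $\Pi_= = 0$ wins below it. Consequently, I would produce a separate dual certificate $Y$ in each regime.

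First I would handle primal feasibility and objective values. Using \Cref{lem:main_lemma} together with the identities $(\Pi_{(\mu_1)} \otimes \dots \otimes \Pi_{(\mu_d)})\Pi_{(n)} = \Pi_{(n)}$ and $\Pi_{(n)}\Pi_\lambda = \delta_{\lambda,(n)}\Pi_{(n)}$ immediately yields $\Tr[\Pi_{(n)}\rho_\neq^\mu] = 1/\binom{n}{\mu}$. Hence the permutation test achieves perfect completeness, soundness $1 - 1/\binom{n}{\mu}$, and primal objective $1 - (1-p)/\binom{n}{\mu}$, whereas the trivial strategy gives $0$, $1$, and $1-p$ respectively. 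Comparing these two linear-in-$p$ objectives pinpoints $p^*(\mu)$ as the unique crossover and produces the piecewise formulas~\eqref{def:opt_pvm_general} and~\eqref{def:opt_value}.

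For the dual, in the regime $p \geq p^*(\mu)$ I would generalize the warm-up ansatz to
\begin{equation*}
Y = (1-p)\,\rho_\neq^\mu + \alpha\,\rho_=^n, \qquad \alpha \defeq p - \frac{1-p}{\binom{n}{\mu}},
\end{equation*}
which is nonnegative precisely in this regime. The constraint $Y - (1-p)\rho_\neq^\mu = \alpha \rho_=^n \succeq 0$ is automatic, and the trace evaluates to $1 - (1-p)/\binom{n}{\mu}$, matching the primal. The nontrivial check is $Y - p\rho_=^n \succeq 0$: substituting \Cref{lem:main_lemma}, the choice of $\alpha$ is exactly what makes the $\Pi_{(n)}$-coefficient vanish, leaving a nonnegative combination of the mutually orthogonal projectors $(\Pi_{(\mu_1)} \otimes \dots \otimes \Pi_{(\mu_d)})\Pi_\lambda$ for $\lambda \neq (n)$. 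For $p \leq p^*(\mu)$ I would instead take $Y = (1-p)\rho_\neq^\mu$: the first constraint becomes trivial, the trace matches $1-p$, and the same expansion shows the $\Pi_{(n)}$-coefficient of $Y - p\rho_=^n$ is $\bigl(1 - p - p\binom{n}{\mu}\bigr)/\bigl(\binom{n}{\mu}\,m_{(n),d}\bigr)$, nonnegative exactly in this regime. Weak SDP duality then closes the argument in both cases.

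I expect the main obstacle to be precisely these PSD checks. What makes them tractable is the structural fact already exploited in \Cref{thm:permtest_optimal_1/2_binary}: $\Pi_{(\mu_1)} \otimes \dots \otimes \Pi_{(\mu_d)}$ commutes with every isotypic projector $\Pi_\lambda$, so the family $\{(\Pi_{(\mu_1)} \otimes \dots \otimes \Pi_{(\mu_d)})\Pi_\lambda\}_\lambda$ consists of mutually orthogonal projectors, reducing the global positivity of $Y - p\rho_=^n$ to coefficientwise positivity along this orthogonal decomposition. The one-sided success probabilities claimed in the theorem follow immediately by reading off $\Tr[\Pi_= \rho_=^n]$ and $\Tr[(\id - \Pi_=) \rho_\neq^\mu]$ for the respective optimal measurement in each regime.
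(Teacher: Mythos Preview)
Your proposal is correct and follows essentially the same approach as the paper's proof: the same two dual certificates $Y = (1-p)\rho_\neq^\mu + \bigl(p - \tfrac{1-p}{\binom{n}{\mu}}\bigr)\rho_=^n$ for $p \geq p^*(\mu)$ and $Y = (1-p)\rho_\neq^\mu$ for $p \leq p^*(\mu)$, with the PSD checks reduced via \Cref{lem:main_lemma} to coefficientwise positivity along the orthogonal projectors $(\Pi_{(\mu_1)} \otimes \dots \otimes \Pi_{(\mu_d)})\Pi_\lambda$, and weak duality closing the argument.
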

\begin{proof}
    The idea for the proof is the same as for \Cref{thm:permtest_optimal_1/2_binary}: we are going to present a feasible solution for the dual, which matches the feasible solution for the primal SDP, defined by \cref{def:opt_pvm_general}.
    Consider 
    \begin{equation} \label{def:Y_guess_general}
        Y \defeq \begin{cases}
        (1-p) \rho^{\mu}_{\neq} + \of*{p - \frac{1-p}{\binom{n}{\mu}}}\rho^{n}_{=} & \text{ if } p^*(\mu) \leq p \leq 1, \\
        (1-p) \rho^{\mu}_{\neq} & \text{ if } 0 \leq p \leq p^*(\mu). 
        \end{cases}
    \end{equation}
    We are going to show that this $Y$ is a feasible solution for the dual SDP. Consider the two cases:
    \begin{itemize}
        \item $0 \leq p \leq p^*(\mu)$. Then from the definition (\ref{def:Y_guess_general}) $Y - (1-p) \rho^{\mu}_{\neq} \succeq 0$ is trivially satisfied. Checking another constraint is also easy using \Cref{lem:main_lemma}.
        \begin{align}
            Y - p \rho^{n}_{=} &= \of*{\frac{1-p}{\binom{n}{\mu}} - p} \frac{\Pi_{(n)}}{\binom{n+d-1}{n}}  + \frac{1-p}{\binom{n}{\mu}} \sum_{\substack{\lambda \vdash_d n \\ \lambda \neq (n)}} \frac{d_\lambda}{m_{\lambda,d}} \of*{\Pi_{(\mu_1)} \otimes \dotsc \otimes \Pi_{(\mu_d)}} \Pi_\lambda,
        \end{align}
        but it is easy to see that 
        \begin{equation}
            \frac{1-p}{\binom{n}{\mu}} - p = \frac{1}{\binom{n}{\mu}} \of*{1 - \frac{p}{p^*(\mu)}} \geq 0,
        \end{equation}
        therefore $Y - p \rho^{n}_{=} \succeq 0$.
        \item $p^*(\mu) \leq p \leq 1$. Again, it is obvious from the definition (\ref{def:Y_guess_general}) that 
        \begin{equation}
            Y - (1-p) \rho^{\mu}_{\neq} =\frac{1}{\binom{n}{\mu}} \of*{\frac{p}{p^*(\mu)} - 1} \rho^{n}_{=} \succeq 0.
        \end{equation}
        Another constraint is similar to the previous case:
        \begin{align}
            Y - p \rho^{n}_{=} &= (1-p) \rho^{\mu}_{\neq} - \frac{1-p}{\binom{n}{\mu}} \rho^{n}_{=} 
            \\
            &= \frac{1-p}{\binom{n}{\mu}} \sum_{\substack{\lambda \vdash_d n \\ \lambda \neq (n)}} \frac{d_\lambda}{m_{\lambda,d}} \of*{\Pi_{(\mu_1)} \otimes \dotsc \otimes \Pi_{(\mu_d)}} \Pi_\lambda \succeq 0
        \end{align}
    \end{itemize}
    It remains to verify that $\Tr[Y]$ matches the primal objective value \eqref{def:opt_value}. For $p^*(\mu) \leq p \leq 1$: $\Tr[Y] = (1-p) + p - \frac{1-p}{\binom{n}{\mu}} = 1 - \frac{1-p}{\binom{n}{\mu}}$. For $0 \leq p \leq p^*(\mu)$: $\Tr[Y] = (1-p)$. Both match \cref{def:opt_value}, so weak SDP duality proves the optimality of the solution \cref{def:opt_pvm_general}.
\end{proof}

As mentioned before, the permutation test does not use any information about the order of the input state. More precisely, since $\Pi_{=} = \Pi_{(n)}$ commutes with all permutations $\psi(\pi)$, we have $\Tr[\Pi_{(n)} \tilde{\rho}_{\neq}^{\mu}] = \Tr[\Pi_{(n)} \rho_{\neq}^{\mu}]$, so the permutation test achieves the same success probability on the permuted input $\tilde{\rho}_{\neq}^{\mu}$ as on the ordered input ${\rho}_{\neq}^{\mu}$. Thus, surprisingly, \textit{knowing} the order of the unequal inputs does not allow one to achieve a higher success probability. Since the problem $\hyperref[def:QSI]{\widetilde{\QSI}_\mu^{p}}$ with randomly permuted inputs is a \textit{strictly} harder problem (any test for $\widetilde{\QSI}_\mu^{p}$ is also valid for $\QSI_\mu^{p}$, so the optimal value for $\widetilde{\QSI}_\mu^{p}$ is at most that for $\QSI_\mu^{p}$), we immediately get the following corollary.

\begin{corollary}
    For $\hyperref[def:QSI]{\widetilde{\QSI}_\mu^{p}}$, given $n$ systems and a permuted partition $\tilde{\mu} \pt n$, we define $p^*(\tilde{\mu}) \defeq \frac{1}{1+\binom{n}{\tilde{\mu}}}$. Then the same measurement as in Theorem \ref{thm:permtest_trivial_optimal} is optimal, with the same success probabilities.
\end{corollary}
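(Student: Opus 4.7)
The plan is to reduce the corollary to Theorem~\ref{thm:permtest_trivial_optimal} via two simple ingredients: (i) the optimizer $\Pi_= = \Pi_{(n)}$ is the projector onto the symmetric subspace and therefore commutes with every permutation operator $\psi(\pi)$, and (ii) $\widetilde{\QSI}_\mu^{p}$ is strictly harder than $\QSI_\mu^{p}$, so the optimal value of the former cannot exceed that of the latter. These two observations are exactly what is needed to transfer the tight bound from the ordered setting to the permuted setting.

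First I would verify that the measurement from \cref{def:opt_pvm_general} attains the value \cref{def:opt_value} on the permuted inputs. Using cyclicity of the trace and the invariance $\psi(\pi)\Pi_{(n)}\psi(\pi^{-1}) = \Pi_{(n)}$,
\begin{equation}
\Tr[\Pi_{(n)}\tilde{\rho}_{\neq}^{\mu}]
= \frac{1}{n!}\sum_{\pi \in \S_n}\Tr[\Pi_{(n)}\psi(\pi)\rho_{\neq}^{\mu}\psi(\pi^{-1})]
= \Tr[\Pi_{(n)}\rho_{\neq}^{\mu}],
\end{equation}
and the same trivially holds for the alternative measurement $\Pi_= = 0$ used in the regime $p \leq p^*(\mu)$. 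Since $\rho_=^{n}$ is already $\S_n$-invariant (it is proportional to $\Pi_{(n)}$), the completeness contribution is unchanged as well, so the total success probability against $\widetilde{\QSI}_\mu^{p}$ inputs is identical to the value already computed for $\QSI_\mu^{p}$. This supplies the lower bound on the optimum for $\widetilde{\QSI}_\mu^{p}$.

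For the matching upper bound I would invoke the reduction highlighted in the paragraph preceding the corollary: any feasible tester for $\widetilde{\QSI}_\mu^{p}$ is automatically a feasible tester for $\QSI_\mu^{p}$, since one may simulate a $\widetilde{\QSI}_\mu^{p}$ instance from a $\QSI_\mu^{p}$ instance by first applying a uniformly random classical relabelling of the input registers, with the success probability preserved in expectation. Hence the optimum for $\widetilde{\QSI}_\mu^{p}$ is at most that of $\QSI_\mu^{p}$, which Theorem~\ref{thm:permtest_trivial_optimal} evaluates to exactly the value achieved above. Combining the two bounds proves optimality, and the one-sided probabilities follow from the identical computation, noting that $\binom{n}{\tilde{\mu}} = \binom{n}{\mu}$ since the multinomial coefficient depends only on the multiset of parts.

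There is no substantive obstacle in this argument: all of the representation-theoretic and SDP-duality content is already packaged inside Lemma~\ref{lem:main_lemma} and Theorem~\ref{thm:permtest_trivial_optimal}, so the corollary reduces to the symmetry of the optimizer $\Pi_{(n)}$ together with the trivial direction of the reduction between the two problem formulations. The only minor point worth stating explicitly is that the same minimax-and-symmetry reasoning from \Cref{subsec:unknown_haar_minimax} ensures that assuming a uniformly random $\sigma$ in the definition of $\widetilde{\QSI}_\mu^{p}$ is without loss of generality, so that the analysis above against $\tilde{\rho}_{\neq}^{\mu}$ really captures the adversarial setting described in \Cref{def:QSI}.
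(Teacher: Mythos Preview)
Your proposal is correct and follows essentially the same approach as the paper: the paper's argument (given in the paragraph immediately preceding the corollary) rests on exactly the two ingredients you identify, namely that $\Pi_{(n)}$ commutes with all $\psi(\pi)$ so that $\Tr[\Pi_{(n)}\tilde{\rho}_{\neq}^{\mu}] = \Tr[\Pi_{(n)}\rho_{\neq}^{\mu}]$, and that $\widetilde{\QSI}_\mu^{p}$ is strictly harder than $\QSI_\mu^{p}$. Your write-up is slightly more explicit (handling the $\Pi_= = 0$ branch and the minimax justification for the uniform $\sigma$), but the logic is identical.
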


\subsubsection{Hypothesis testing interpretation}

\begin{figure}[ht]
    \centering
    \begin{tikzpicture}
    \begin{axis}[
        width=0.78\textwidth,
        height=0.65\textwidth,
        xlabel={$\alpha = 1 - \mathbb{P}_{c}$},
        ylabel={$\beta = 1 - \mathbb{P}_{s}$},
        xmin=0, xmax=1.08,
        ymin=0, ymax=1.08,
        xtick={0,1.0},
        xticklabels={$0$,$1$},
        ytick={0,0.25,0.75,1.0},
        yticklabels={$0$,$\tfrac{1}{\binom{n}{\mu}}$,$1\!-\!\tfrac{1}{\binom{n}{\mu}}$,$1$},
        axis lines=left,
        every axis plot/.append style={thick},
        clip=false,
    ]

    \addplot[fill=blue!8, draw=none, forget plot]
        coordinates {(0,0.25) (1,0) (1,0.75) (0,1) (0,0.25)};

    \addplot[color=blue, thick, domain=0:1, samples=2, forget plot]
        {(1-x)/4};

    \addplot[color=blue!50, thin, dashed, domain=0:1, samples=2, forget plot]
        {1-x/4};

    \addplot[color=blue!30, thin, dashed, forget plot]
        coordinates {(0,0.25) (0,1)};

    \addplot[color=blue!30, thin, dashed, forget plot]
        coordinates {(1,0) (1,0.75)};

    \addplot[color=red, dashed, thick, domain=0:0.375, samples=2, forget plot]
        {0.25 - (2/3)*x};

    \addplot[color=red, dashdotted, thick, domain=0.40:1, samples=2, forget plot]
        {(1/9)*(1-x)};

    \addplot[only marks, mark=*, mark size=3pt, color=blue, forget plot]
        coordinates {(0, 0.25)};
    \addplot[only marks, mark=*, mark size=3pt, color=black, forget plot]
        coordinates {(1, 0)};
    \addplot[only marks, mark=*, mark size=2.5pt, color=black!50, forget plot]
        coordinates {(0, 1)};
    \addplot[only marks, mark=*, mark size=2.5pt, color=black!50, forget plot]
        coordinates {(1, 0.75)};

    \node[anchor=west, font=\small, color=blue] at (axis cs:0.00,0.28)
        {Perm.\ test};
    \node[anchor=north, font=\small] at (axis cs:1.1,0.08)
        {Trivial strategy};
    \node[anchor=west, font=\small] at (axis cs:1.03,0.75)
        {$(1,\,1\!-\!\tfrac{1}{\binom{n}{\mu}})$};

    \node[font=\small, color=blue!50!black] at (axis cs:0.4,0.55) {$\mathcal{R}(\mu)$};

    \node[font=\small, color=blue, anchor=north, rotate=-10] at (axis cs:0.55,0.22)
        {$\beta = \tfrac{1-\alpha}{\binom{n}{\mu}}$};

    \node[font=\small, color=blue!50, anchor=south, rotate=-10] at (axis cs:0.45,0.90)
        {$\beta = 1-\tfrac{\alpha}{\binom{n}{\mu}}$};

    \node[font=\small, color=red, anchor=south west, rotate=-23] at (axis cs:0.12,0.08)
        {$p > p^*$};
    \node[font=\small, color=red, anchor=south west,  rotate=-03] at (axis cs:0.40,0.06)
        {$p < p^*$};

    \end{axis}
    \end{tikzpicture}
    \caption{The hypothesis testing region $\mathcal{R}(\mu)$ (shaded parallelogram) for $n=4$ and $\mu=(3,1)$. 
    The solid blue line is the optimal boundary $\beta = (1-\alpha)/\binom{n}{\mu}$ from \Cref{prop:ht_region}; the dashed blue line is the upper boundary $\beta = 1 - \alpha/\binom{n}{\mu}$. 
    The red lines are level sets of the function $f(\alpha,\beta) = p\alpha + (1-p)\beta$ for different values of the parameter $p$: dashed line is minimizer for $p > p^*(\mu)$ (optimum at permutation test), dash-dotted is the minimizer for $p < p^*(\mu)$ (optimum at trivial test).
    }
    \label{fig:ht_region}
\end{figure}

The results of \Cref{thm:permtest_trivial_optimal} can be reformulated in the language of quantum hypothesis testing. Given the two hypotheses $H_0: \rho_{=}^{n}$ (equal) and $H_1: \rho_{\neq}^{\mu}$ (unequal) and a POVM $\{\Pi_{=}, \id - \Pi_{=}\}$, the completeness and soundness probabilities from \Cref{thm:permtest_trivial_optimal} are related to the type~I and type~II errors by
\begin{equation}
    \alpha \defeq 1 - \mathbb{P}_{c} = 1 - \Tr[\Pi_{=} \rho_{=}^{n}], \qquad \beta \defeq 1 - \mathbb{P}_{s} = \Tr[\Pi_{=} \rho_{\neq}^{\mu}].
\end{equation}
The \emph{hypothesis testing region} $\mathcal{R}(\mu) \subseteq [0,1]^2$ is the set of all achievable pairs $(\alpha,\beta)$.

\textit{Asymmetric setting.} In the \emph{one-sided error} (or asymmetric) setting, one fixes one error type and optimises the other. For instance, requiring perfect completeness ($\alpha = 0$) and minimising $\beta$ recovers the known result of~\cite{kada2008efficiency}: the permutation test is optimal with $\beta = 1/\binom{n}{\mu}$, i.e.\ $\mathbb{P}_{s} = 1-1/\binom{n}{\mu}$.

\textit{Symmetric setting.} In the \emph{Bayesian} (or symmetric) setting with prior $\{p, 1-p\}$, one maximises the average success probability $\mathbb{P}_{succ} = p\,\mathbb{P}_{c} + (1-p)\,\mathbb{P}_{s} = 1 - p\alpha - (1-p)\beta$. \Cref{thm:permtest_trivial_optimal} fully solves this: the permutation test ($\alpha=0$, $\beta = 1/\binom{n}{\mu}$) is optimal for $p \geq p^*(\mu)$, and the trivial test ($\alpha = 1$, $\beta = 0$) is optimal for $p \leq p^*(\mu)$.

These results determine the complete trade-off between $\alpha$ and $\beta$ in the full achievable region.

\begin{proposition}\label{prop:ht_region}
    The achievable region $\mathcal{R}(\mu)$ is the parallelogram
    \begin{equation}\label{eq:ht_region}
        \mathcal{R}(\mu) = \set*{(\alpha,\beta) \in [0,1]^2 \;\Big|\; \frac{1-\alpha}{\binom{n}{\mu}} \leq \beta \leq 1 - \frac{\alpha}{\binom{n}{\mu}}}.
    \end{equation}
    In particular, the lower boundary (optimal trade-off) is
    \begin{equation}\label{eq:ht_boundary}
        \beta = \frac{1 - \alpha}{\binom{n}{\mu}}, \qquad \alpha \in [0,1],
    \end{equation}
    and the upper boundary is $\beta = 1 - \alpha/\binom{n}{\mu}$. Equivalently, the optimal completeness--soundness trade-off is $\mathbb{P}_{s} = 1 - \mathbb{P}_{c}/\binom{n}{\mu}$.
\end{proposition}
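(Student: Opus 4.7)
The plan is to combine the optimality result of Theorem~\ref{thm:permtest_trivial_optimal} with two structural features of the hypothesis-testing region. First, the map $\Pi_= \mapsto (\alpha,\beta) = (1 - \Tr[\Pi_= \rho_{=}^{n}], \Tr[\Pi_= \rho_{\neq}^{\mu}])$ is affine in $\Pi_=$, so $\mathcal{R}(\mu)$ is a convex, compact subset of $[0,1]^2$. Second, the involution $\Pi_= \mapsto \id - \Pi_=$ induces the point reflection $(\alpha,\beta) \mapsto (1-\alpha, 1-\beta)$, so $\mathcal{R}(\mu)$ is symmetric about $(\tfrac{1}{2},\tfrac{1}{2})$.

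Next, I would pin down the lower boundary directly from Theorem~\ref{thm:permtest_trivial_optimal}. Maximising $\mathbb{P}_{succ} = 1 - (p\alpha + (1-p)\beta)$ over $\mathcal{R}(\mu)$ is equivalent to minimising the linear functional $p\alpha + (1-p)\beta$, and the theorem gives the exact minimum: it equals $(1-p)/\binom{n}{\mu}$ for $p \geq p^*(\mu)$, attained at $(0,1/\binom{n}{\mu})$, and equals $p$ for $p \leq p^*(\mu)$, attained at $(1,0)$. Since this bound must hold for every $p$, any achievable pair $(\alpha,\beta)$ satisfies $p\alpha + (1-p)\beta \geq (1-p)/\binom{n}{\mu}$ for all $p \in [p^*(\mu),1]$, which after rearrangement forces $\beta \geq (1-\alpha)/\binom{n}{\mu}$. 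Conversely, the endpoints $(0,1/\binom{n}{\mu})$ and $(1,0)$ both lie in $\mathcal{R}(\mu)$ (via the Permutation test $\Pi_=\defeq\Pi_{(n)}$ and the trivial ``output not equal'' test $\Pi_=\defeq 0$), so by convexity the whole segment $\beta = (1-\alpha)/\binom{n}{\mu}$ is realised and is precisely the lower boundary.

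Then I would obtain the upper boundary by pushing the lower one through the complementation symmetry: the reflection sends $\beta = (1-\alpha)/\binom{n}{\mu}$ to $\beta = 1 - \alpha/\binom{n}{\mu}$, with endpoints $(0,1)$ and $(1,1-1/\binom{n}{\mu})$ realised by $\Pi_= = \id$ and $\Pi_= = \id - \Pi_{(n)}$ respectively. The four corners then lie in $\mathcal{R}(\mu)$, so convexity forces the full parallelogram to be contained in the region, while the two boundary bounds constrain the region to lie inside the parallelogram; equality follows, and \eqref{eq:ht_boundary} is read off directly.

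The main obstacle I anticipate is the ``for every $p$ simultaneously'' step in the lower-boundary argument: Theorem~\ref{thm:permtest_trivial_optimal} only gives the optimum for each fixed prior $p$, and one has to package this pointwise information into a region-level statement. The clean way is to note that for $p \in [p^*(\mu),1]$ the value function is linear in $p$ with value $(1-p)/\binom{n}{\mu}$, so the corresponding family of supporting half-planes $\{p\alpha + (1-p)\beta \geq (1-p)/\binom{n}{\mu}\}$ has the half-plane $\beta \geq (1-\alpha)/\binom{n}{\mu}$ as its intersection over this range of slopes — a short convex-duality computation which I expect to be routine once set up properly.
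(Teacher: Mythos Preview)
Your argument is correct and takes a genuinely different route from the paper. The paper works directly with the operator $M \defeq \rho_{\neq}^{\mu} - \tfrac{1}{\binom{n}{\mu}}\rho_{=}^{n}$, whose positivity (already established in the dual-feasibility step of Theorem~\ref{thm:permtest_trivial_optimal}) gives the identity $\beta = \tfrac{1-\alpha}{\binom{n}{\mu}} + \Tr[\Pi_= M]$; the bounds $0 \leq \Tr[\Pi_= M] \leq \Tr M$ then yield both boundaries at once, and the explicit one-parameter family $\Pi_= = (1-\alpha)\Pi_{(n)} + t(\id - \Pi_{(n)})$ fills every interior point. Your approach instead treats Theorem~\ref{thm:permtest_trivial_optimal} as a black box and recovers the region purely geometrically via convexity, the complementation symmetry $(\alpha,\beta)\mapsto(1-\alpha,1-\beta)$, and the support-function values supplied by the theorem. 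This is conceptually cleaner and makes transparent that the full region is already determined by the Bayesian optimum, whereas the paper's route is more self-contained and immediately constructive. One simplification you can make: the intersection over $p\in[p^*(\mu),1]$ is unnecessary --- all those supporting lines pass through $(0,1/\binom{n}{\mu})$, and the constraint at $p=p^*(\mu)$ alone (slope $-1/\binom{n}{\mu}$) is the tightest, so a single evaluation of Theorem~\ref{thm:permtest_trivial_optimal} at $p=p^*(\mu)$ already delivers $\beta \geq (1-\alpha)/\binom{n}{\mu}$.
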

\begin{proof}
    Define the operator $M \defeq \rho_{\neq}^{\mu} - \frac{1}{\binom{n}{\mu}}\rho_{=}^{n}$. The proof of \Cref{thm:permtest_trivial_optimal} establishes that $M \succeq 0$ (cf.\ the dual feasibility condition at $p = p^*(\mu)$), and $\Tr[M] = 1 - 1/\binom{n}{\mu}$. For any POVM element $0 \preceq \Pi_{=} \preceq \id$, we can write
    \begin{equation}
        \beta = \Tr[\Pi_{=}\,\rho_{\neq}^{\mu}] = \frac{1}{\binom{n}{\mu}}\Tr[\Pi_{=}\,\rho_{=}^{n}] + \Tr[\Pi_{=}\, M] = \frac{1-\alpha}{\binom{n}{\mu}} + \Tr[\Pi_{=}\, M].
    \end{equation}

    \textit{Lower boundary.} Since $\Pi_{=} \succeq 0$ and $M \succeq 0$, we have $\Tr[\Pi_{=}\, M] \geq 0$, hence $\beta \geq (1-\alpha)/\binom{n}{\mu}$. Equality holds for $\Pi_{=} = (1-\alpha)\,\Pi_{(n)}$, since $\Pi_{(n)} M \Pi_{(n)} = 0$ (because $\Pi_{(n)}\rho_{\neq}^{\mu}\Pi_{(n)} = \rho_{=}^{n}/\binom{n}{\mu}$, see \Cref{lem:main_lemma}).

    \textit{Upper boundary.} Since $\Pi_{=} \preceq \id$ and $M \succeq 0$, we have $\Tr[\Pi_{=}\, M] \leq \Tr[M] = 1 - 1/\binom{n}{\mu}$, hence $\beta \leq 1 - \alpha/\binom{n}{\mu}$. Equality holds for $\Pi_{=} = (1-\alpha)\,\Pi_{(n)} + (\id - \Pi_{(n)})$, which satisfies $0 \preceq \Pi_= \preceq \id$ and gives
    \begin{align}
        \Tr[\Pi_{=}\,\rho_{=}^{n}] &= 1-\alpha, \\
        \Tr[\Pi_{=}\,\rho_{\neq}^{\mu}] &= \frac{1-\alpha}{\binom{n}{\mu}} + \Tr[\rho_{\neq}^{\mu}] - \Tr[\Pi_{(n)}\rho_{\neq}^{\mu}] = \frac{1-\alpha}{\binom{n}{\mu}} + 1 - \frac{1}{\binom{n}{\mu}} = 1 - \frac{\alpha}{\binom{n}{\mu}}.
    \end{align}

    \textit{Achievability.} For any $\alpha \in [0,1]$ and $\beta_0 \in \bigl[\frac{1-\alpha}{\binom{n}{\mu}},\; 1-\frac{\alpha}{\binom{n}{\mu}}\bigr]$, choose $\Pi_{=} = (1-\alpha)\,\Pi_{(n)} + t\,(\id - \Pi_{(n)})$ with $t = \bigl(\beta_0 - \frac{1-\alpha}{\binom{n}{\mu}}\bigr)\big/\bigl(1-\frac{1}{\binom{n}{\mu}}\bigr) \in [0,1]$. This satisfies $0 \preceq \Pi_{=} \preceq \id$ and achieves $(\alpha,\beta_0)$.
\end{proof}

\noindent The region $\mathcal{R}(\mu)$ is depicted in \Cref{fig:ht_region}. 
It is a parallelogram with four vertices: $(0, 1/\binom{n}{\mu})$ (permutation test), $(1,0)$ (trivial test), $(1,1-1/\binom{n}{\mu})$ (the ``anti-test'' $\Pi_{=} = \id - \Pi_{(n)}$), and $(0,1)$ (always output ``equal''). Both boundaries have slope $-1/\binom{n}{\mu}$. In the symmetric setting, the level sets of $\mathbb{P}_{succ} = 1 - p\alpha - (1-p)\beta$ are lines of slope $-p/(1-p)$. The threshold prior $p^*(\mu) = 1/(1+\binom{n}{\mu})$ is precisely the value at which $p/(1-p) = 1/\binom{n}{\mu}$, so the level set is parallel to both boundaries. For $p > p^*(\mu)$ the level sets are steeper, so the optimum is at the permutation test vertex; for $p < p^*(\mu)$ they are flatter, so the optimum is at the trivial test vertex --- recovering \Cref{thm:permtest_trivial_optimal}.

\begin{remark}[Comparison with parallel repetition of the Swap test]
A common use of the Swap test is to compare the outputs of two procedures by applying it independently to multiple pairs of states and amplifying the success probability via parallel repetition. In our setting, this corresponds to the following task: given $n$ copies of $\ket{\psi}$ and $n$ copies of $\ket{\phi}$, decide whether $\ket{\psi} = \ket{\phi}$ or $\langle \psi | \phi \rangle = 0$. This is precisely the instance $\mu = (n,n)$ of the quantum state identity problem on $2n$ systems with prior $p = \frac{1}{2}$.

Repeating the Swap test on $n$ pairs yields perfect completeness and soundness $1 - 2^{-n}$, so the average success probability is
\begin{align}
P_{\mathrm{swap}}(n) = 1 - \frac{1}{2^{n+1}}.
\end{align}
In contrast, the permutation test achieves the optimal success probability
\begin{align}
P_{\mathrm{perm}}(n) = 1 - \frac{1}{2 \binom{2n}{n}}.
\end{align}
Using the standard estimate $\binom{2n}{n} \sim \frac{4^n}{\sqrt{\pi n}}$, we obtain
\begin{align}
\frac{1}{2 \binom{2n}{n}} \sim \frac{\sqrt{\pi n}}{2 \cdot 4^n}.
\end{align}

To reach error at most $\varepsilon$, the Swap test therefore requires
\begin{align}
2^{-(n+1)} \le \varepsilon 
&\Longleftrightarrow 
n \gtrsim \log_2(1/\varepsilon),
\end{align}
while the permutation test requires
\begin{align}
\frac{\sqrt{\pi n}}{2 \cdot 4^n} \le \varepsilon 
&\Longleftrightarrow 
4^n \gtrsim \frac{1}{\varepsilon} 
\Longleftrightarrow 
n \gtrsim \tfrac{1}{2} \log_2(1/\varepsilon),
\end{align}
up to lower-order terms. Hence, the permutation test achieves the same error using asymptotically a factor $2$ fewer copies of each state than parallel repetition of the Swap test.

In particular, our optimality result implies that this improvement is information-theoretically tight: no protocol can achieve a higher success probability with fewer copies.
\end{remark}

\subsection{\texorpdfstring{Performance of the $G$-test}{Performance of the G-test}}
\label{sec:G_test_perf}
We can generalize the idea of the permutation test to an arbitrary subgroup $G \subseteq \S_n$. Assuming that we want to distinguish between $\tilde{\rho}_{\neq}^{\mu}$ and $\rho_{=}^{n}$, we call the \emph{$G$-test} the measurement given by
\begin{equation}
    \Pi_= = \Pi_G, \qquad \Pi_{\neq} = \id - \Pi_G,
\end{equation}
where $\Pi_G$ is the projector onto the trivial irrep of $G$:
\begin{equation}
    \Pi_G \defeq \frac{1}{\abs{G}} \sum_{\pi \in G} \psi(\pi).
\end{equation}
For $G = \S_n$ we recover the usual permutation test $\Pi_G = \Pi_{(n)}$. The $G$-test has the perfect completeness probability:
\begin{equation}
    \mathbb{P}^{G}_{c}(\mu) = \Tr[\rho_{=}^{n} \Pi_G] = \frac{ \Tr[\Pi_{(n)} \Pi_G ]}{\binom{n+d-1}{n}} = \frac{\Tr[\Pi_{(n)}]}{\binom{n+d-1}{n}} = 1,
\end{equation}
because for every $\pi \in G$ we have $\pi\Pi_{(n)}=\Pi_{(n)}$.
Due to \Cref{lem:main_lemma} we have
\begin{equation}
    \Tr[\tilde{\rho}_{\neq}^{\mu}\Pi_G] = \frac{1}{\binom{n}{\mu}}\sum_{\substack{\lambda \pt_d n}} \frac{K_{\lambda,\mu} \Tr[\Pi_\lambda \Pi_G]}{m_{\lambda,d}},
\end{equation}
so the soundness probability of the $G$-test can be calculated to be
\begin{equation}\label{eq:G_test_formula}
    \mathbb{P}^{G}_{s}(\mu) = 1 - \frac{1}{\binom{n}{\mu}} \sum_{\substack{\lambda \pt_d n}} K_{\lambda,\mu} r^{G}_{\lambda},
\end{equation}
where $r^{G}_{\lambda}$ is the multiplicity of the trivial irrep of the subgroup $G \subseteq \S_n$ inside the irrep $\lambda$ of the symmetric group $\S_n$. 

\subsection{The Circle test}

In this section, we consider the circle test, i.e. the case when $G$ is the cyclic group $\textup{C}_n \defeq \ipp{(12\cdots n)} \subset \S_n$. Then the numbers $r^{\textup{C}_n}_{\lambda}$ can be understood combinatorially \cite{kraskiewicz2001algebra}:
\begin{equation}
    r^{\textup{C}_n}_{\lambda} = \abs{\set{T \in \SYT(\lambda) \, | \, \maj_n(T) = 0}},
\end{equation}
where $\maj_n(T)$ is the \emph{major index modulo $n$} of the \emph{standard Young tableaux} $T$. Moreover, one can obtain the following identity \cite[Theorem 4]{swanson2018existence} for $\lambda \pt n$:
\begin{equation}
    \frac{r^{\textup{C}_n}_{\lambda}}{d_\lambda} = \frac{1}{n} + \frac{1}{n} \sum_{\substack{\ell | n \\ \ell \neq 1}} \frac{\chi_{\lambda}(\ell^{n/\ell})}{d_\lambda} \phi(\ell),
\end{equation}
where $\phi(\ell)$ is the Euler's totient function and $\chi_{\lambda}(\ell^{n/\ell})$ is the symmetric group $\S_n$ character for the conjugacy class of type $(\ell,\dotsc,\ell)$. Now since $\sum_{\substack{\lambda \pt_d n}} K_{\lambda,\mu} d_{\lambda} = \binom{n}{\mu}$ we get the general soundness error probability
\begin{equation} \label{sound:circle}
    1 - \mathbb{P}^{G}_{s}(\mu) = \frac{1}{n} + \frac{1}{n}  \sum_{\substack{\lambda \pt_d n}} \frac{K_{\lambda,\mu} d_\lambda}{\binom{n}{\mu}} \sum_{\substack{\ell | n \\ \ell \neq 1}} \frac{\chi_{\lambda}(\ell^{n/\ell})}{d_\lambda} \phi(\ell).
\end{equation}

It is easy to see that $r^{\textup{C}_n}_{(n)} = 1$ and $r^{\textup{C}_n}_{(n-1,1)} = 0$. Therefore, for $\mathbb{P}^{G}_{s}(n,h) \defeq \mathbb{P}^{G}_{s}((n-h,h))$ we have from \cref{eq:G_test_formula}:
\begin{align}
    \mathbb{P}^{G}_{s}(n,1) = 1 - \frac{1}{n}.
\end{align}
The worst case analysis of the circle test for prime $n$ and for asymptotic $n$ was performed in \cite{kada2008efficiency}. \Cref{sound:circle} suggests that more fine-grained analysis is possible via Roichmann's bound~\cite{roichman1996upper} and other results of \cite{swanson2018existence}.

\subsection{The Iterated Wreath test} 
\label{sec:def_wreath_test}
We now initiate the study of the $G$-test when $G$ is an iterated wreath product of symmetric groups. More precisely, if we have an arbitrary number of systems $n$ with the prime factorization $n = \prod_{i=1}^{r}p_i^{m_i}$ then we define $G$ as an iterated wreath product: \begin{equation}
    G \defeq \of{\S_{p_1} \wr^{m_1} \S_{p_1}} \wr \of{\S_{p_2} \wr^{m_2} \S_{p_2}} \wr \dotsc \wr \of{\S_{p_r} \wr^{m_r} \S_{p_r}},
\end{equation}
where we used the notation
\begin{equation}
    \S_{p} \wr^{m} \S_{p} \defeq \underbrace{\S_{p} \wr \dotsc \wr \S_{p}}_{m}.
\end{equation}

Note, that, for this group, two values of $r^G_\lambda$, needed for \cref{eq:G_test_formula}, can be determined immediately. First, $r^G_{(n)} = 1$ trivially, since the trivial representation of $\S_n$ restricts to the trivial representation of any subgroup. 
Second, $G$ acts transitively on $[n]$: each factor $\S_{p_i} \wr^{m_i} \S_{p_i}$ is the automorphism group of a complete $p_i$-ary rooted tree acting on its $p_i^{m_i}$ leaves, and wreath products of transitive groups are transitive. 
By Burnside's lemma, the average number of fixed points of elements of $G$ on $[n]$ therefore equals $1$. 
Since the character of the standard representation satisfies $\chi_{(n-1,1)}(g) = \mathrm{fix}(g) - 1$, where $\mathrm{fix}(g)$ is the number of fixed points of $[n]$ for $g$, we obtain
\begin{equation}
    r^G_{(n-1,1)} = \frac{1}{|G|}\sum_{g\in G}\bigl(\mathrm{fix}(g)-1\bigr) = 1 - 1 = 0.
\end{equation}
Substituting into \cref{eq:G_test_formula} with $\mu = (n-1,1)$ then gives $\mathbb{P}^G_s(n,1) = 1 - 1/n$, so the iterated wreath $G$-test is already optimal for $h=1$, matching both the circle test and the permutation test.

For $h \geq 2$, computing $r^G_\lambda$ is significantly harder. 
While the irreducible representations of iterated wreath products are well studied — they are classified by labeled rooted trees \cite{orellana2004rooted, im2018generalized}, with branching rules described via Bratteli diagrams — these results concern the internal representation theory of $G$ as an abstract group. 
The quantity $r^G_\lambda$ requires instead the restriction of an $\S_n$-irrep $\lambda$ to $G \subset \S_n$, which is not captured by these results and for which no closed-form formula is known in the literature. 
This motivates the approach of the next section, where we study the special case $n = 2^m$, so $G = \S_2 \wr^m \S_2$. 
We refer to this case as \emph{Iterated Swap Tree}. 
Rather than computing $r^G_\lambda$ exactly, we develop a recursive argument that bounds the protocol's soundness.

\section{Approximation of the Permutation test by the Iterated Swap Tree}
\subsection{The Iterated Swap Tree protocol}
We propose a new simple protocol to compute the $\hyperref[def:QSI]{\widetilde{\QSI}_\mu^{p}}$ problem based only on applying the Swap test in a tree-like way on all the inputs. One motivation is that this protocol will always be correct on equal inputs and hence has perfect completeness. 
Another motivation is practical, applying a single Swap test requires only controlling a single gate while the circle test requires control of $n$ shifts. 
The permutation test even requires controlling over $n!$ permutations. 
Not only is the controlled operation easier, but also the Fourier transform one has to apply afterwards to read out the correct measurement is only on a single qubit, instead of $\log_2(n)$ qubits for the circle test or $\log_2(n!)$ qubits for the permutation test. 
Unfortunately, we do not know of any way to compute the quantities in the aforementioned $G$-test to get the exact performance of the protocol. However, we give an alternative way to still bound its performance and show that it is similar to that of the circle test. 
Note that this operation uses $n-1$ Swap tests, answering an open question in \cite{kada2008efficiency} that asks for an efficient approximation of the permutation test using only $O(n)$ Swap tests.

\begin{custalgo}[The Iterated Swap Tree (IST)] \label{alg:IST}
\noindent \textbf{Input}: $n=2^m$ quantum states $\ket{\psi_1} \otimes \ket{\psi_2} \otimes \dots \otimes \ket{\psi_n}$, 
\begin{enumerate}
    \item Randomly permute the index labels of all $n$ input registers with a permutation $P$ to obtain the ordered set $I = \{i_1,i_2,\dots,i_n\}$ with $i_k \in [n]$.
    \item For $l \in \{0,\dots,m-1\}$:
    \begin{enumerate}
        \item Let $S_j$, where $\bigcup_{j=1}^{n/2^l} S_j = I $ be the sets of $2^l$ adjacent indices in $I$.
        \item For $b \in [n/2^{l+1}]$
        \begin{enumerate}
            \item Perform the Swap test with as inputs registers $S_{2b-1}$ and $S_{2b}$.
            \item If the measurement outcome is $1$, return `not equal', else continue.
        \end{enumerate}
    \end{enumerate}
    \item Return `equal'.
\end{enumerate}
\end{custalgo}

\tikzset{every picture/.style={line width=0.75pt}} 
\begin{figure}
    \centering

\begin{tikzpicture}[x=0.75pt,y=0.75pt,yscale=-1,xscale=1]

\draw    (53,28.33) -- (157.5,28.33) ;
\draw    (53,62.13) -- (158,62.13) ;
\draw    (153.83,62.13) .. controls (192.83,62.13) and (151.83,47.33) .. (191.83,47.33) ;
\draw    (152.83,28.33) .. controls (190.83,27.83) and (154.83,45) .. (191.83,45) ;
\draw  [fill={rgb, 255:red, 255; green, 255; blue, 255 }  ,fill opacity=1 ] (121.42,21.5) -- (159.5,21.5) -- (159.5,74.5) -- (121.42,74.5) -- cycle ;
\draw    (53,91) -- (155.5,91) ;
\draw    (53,124.79) -- (155,124.79) ;
\draw    (153.83,124.79) .. controls (192.83,124.79) and (151.83,110) .. (191.83,110) ;
\draw    (152.83,91) .. controls (190.83,90.5) and (154.83,107.67) .. (191.83,107.67) ;
\draw  [fill={rgb, 255:red, 255; green, 255; blue, 255 }  ,fill opacity=1 ] (121.42,84.17) -- (159.5,84.17) -- (159.5,137.17) -- (121.42,137.17) -- cycle ;
\draw    (224.17,47.66) .. controls (262.17,47.17) and (224.83,76.67) .. (261.83,76.67) ;
\draw    (223.5,108.13) .. controls (262.5,108.13) and (222.83,78.67) .. (262.83,78.67) ;
\draw    (225.83,110.46) .. controls (264.83,110.46) and (225.17,81) .. (265.17,81) ;
\draw    (225.42,45.48) .. controls (265.42,44.99) and (228.09,74.49) .. (265.09,74.49) ;
\draw  [fill={rgb, 255:red, 255; green, 255; blue, 255 }  ,fill opacity=1 ] (187.75,19.5) -- (225.83,19.5) -- (225.83,138) -- (187.75,138) -- cycle ;
\draw    (53,154.33) -- (154,154.33) ;
\draw    (53,188.13) -- (154,188.13) ;
\draw    (152.83,188.13) .. controls (191.83,188.13) and (150.83,173.33) .. (190.83,173.33) ;
\draw    (151.83,154.33) .. controls (189.83,153.83) and (153.83,171) .. (190.83,171) ;
\draw  [fill={rgb, 255:red, 255; green, 255; blue, 255 }  ,fill opacity=1 ] (120.42,147.5) -- (158.5,147.5) -- (158.5,200.5) -- (120.42,200.5) -- cycle ;
\draw    (53,217) -- (152.5,217) ;
\draw    (53,250.79) -- (153.5,250.79) ;
\draw    (152.83,250.79) .. controls (191.83,250.79) and (150.83,236) .. (190.83,236) ;
\draw    (151.83,217) .. controls (189.83,216.5) and (153.83,233.67) .. (190.83,233.67) ;
\draw  [fill={rgb, 255:red, 255; green, 255; blue, 255 }  ,fill opacity=1 ] (120.42,209.17) -- (158.5,209.17) -- (158.5,262.17) -- (120.42,262.17) -- cycle ;
\draw    (223.17,173.66) .. controls (261.17,173.17) and (223.83,202.67) .. (260.83,202.67) ;
\draw    (222.5,234.13) .. controls (261.5,234.13) and (221.83,204.67) .. (261.83,204.67) ;
\draw    (224.83,236.46) .. controls (263.83,236.46) and (224.17,207) .. (264.17,207) ;
\draw    (225.5,171.33) .. controls (263.5,170.83) and (226.17,200.33) .. (263.17,200.33) ;
\draw  [fill={rgb, 255:red, 255; green, 255; blue, 255 }  ,fill opacity=1 ] (186.75,145.5) -- (224.83,145.5) -- (224.83,264) -- (186.75,264) -- cycle ;
\draw    (265.09,74.49) -- (306.09,74.49) ;
\draw    (265.17,76.46) -- (306.17,76.46) ;
\draw    (265.17,78.64) -- (306.17,78.64) ;
\draw    (265.17,81) -- (306.17,81) ;
\draw    (263.24,200.49) -- (304.24,200.49) ;
\draw    (263.32,202.46) -- (304.32,202.46) ;
\draw    (263.32,204.64) -- (304.32,204.64) ;
\draw    (263.32,207) -- (304.32,207) ;
\draw  [fill={rgb, 255:red, 255; green, 255; blue, 255 }  ,fill opacity=1 ] (259.08,18.5) -- (294.17,18.5) -- (294.17,267.5) -- (259.08,267.5) -- cycle ;
\draw  [fill={rgb, 255:red, 255; green, 255; blue, 255 }  ,fill opacity=1 ] (60.08,23.5) -- (95.17,23.5) -- (95.17,266.5) -- (60.08,266.5) -- cycle ;
\draw    (405,196.33) -- (525.86,196.33) ;
\draw    (405,170.66) -- (513.67,170.66) ;
\draw    (407,231.66) -- (527.57,231.66) ;
\draw  [fill={rgb, 255:red, 255; green, 255; blue, 255 }  ,fill opacity=1 ] (436.75,184.75) -- (481,184.75) -- (481,239.67) -- (436.75,239.67) -- cycle ;
\draw  [fill={rgb, 255:red, 255; green, 255; blue, 255 }  ,fill opacity=1 ] (413.32,160.79) -- (430.57,160.79) -- (430.57,179.12) -- (413.32,179.12) -- cycle ;

\draw  [fill={rgb, 255:red, 255; green, 255; blue, 255 }  ,fill opacity=1 ] (487.75,161.12) -- (505,161.12) -- (505,179.45) -- (487.75,179.45) -- cycle ;

\draw  [fill={rgb, 255:red, 255; green, 255; blue, 255 }  ,fill opacity=1 ] (511.89,160.67) -- (529.14,160.67) -- (529.14,179.01) -- (511.89,179.01) -- cycle ;
\draw    (514.97,175.67) .. controls (517.14,170.67) and (517.7,169.95) .. (520.52,169.84) .. controls (523.34,169.73) and (523.92,171.95) .. (525.81,175.67) ;
\draw    (519.92,175.01) -- (525.03,167.34) ;
\draw  [fill={rgb, 255:red, 0; green, 0; blue, 0 }  ,fill opacity=1 ] (526.01,165.7) -- (526.08,170.2) -- (522.02,167.77) -- cycle ;

\draw    (458.39,170.67) -- (458.39,184.89) ;
\draw  [color={rgb, 255:red, 0; green, 0; blue, 0 }  ,draw opacity=1 ][fill={rgb, 255:red, 0; green, 0; blue, 0 }  ,fill opacity=1 ] (456.17,170.67) .. controls (456.17,169.44) and (457.16,168.44) .. (458.39,168.44) .. controls (459.62,168.44) and (460.61,169.44) .. (460.61,170.67) .. controls (460.61,171.89) and (459.62,172.89) .. (458.39,172.89) .. controls (457.16,172.89) and (456.17,171.89) .. (456.17,170.67) -- cycle ;
\draw    (428,56.33) -- (490.43,56.33) ;
\draw    (428,91.66) -- (491.57,91.66) ;
\draw  [fill={rgb, 255:red, 255; green, 255; blue, 255 }  ,fill opacity=1 ] (436.32,46.75) -- (480.57,46.75) -- (480.57,101.67) -- (436.32,101.67) -- cycle ;
\draw [color={rgb, 255:red, 155; green, 155; blue, 155 }  ,draw opacity=1 ] [dash pattern={on 4.5pt off 4.5pt}]  (336,25) -- (336,271) ;

\draw (450,130) node [anchor=north west][inner sep=0.75pt]    {$=$};
\draw (448,68) node [anchor=north west][inner sep=0.75pt]    {$\mathsf{ST}$};
\draw (398,83) node [anchor=north west][inner sep=0.75pt]    {$| \psi _{2} \rangle $};
\draw (398,48) node [anchor=north west][inner sep=0.75pt]    {$| \psi _{1} \rangle $};
\draw (439.45,207) node [anchor=north west][inner sep=0.75pt]  [font=\scriptsize]  {$\mathrm{SWAP}$};
\draw (383,162) node [anchor=north west][inner sep=0.75pt]    {$|0\rangle $};
\draw (378,224) node [anchor=north west][inner sep=0.75pt]    {$| \psi _{2} \rangle $};
\draw (378,188) node [anchor=north west][inner sep=0.75pt]    {$| \psi _{1} \rangle $};
\draw (70,132) node [anchor=north west][inner sep=0.75pt]    {$P$};
\draw (23,242.5) node [anchor=north west][inner sep=0.75pt]    {$| \psi _{8} \rangle $};
\draw (23,208) node [anchor=north west][inner sep=0.75pt]    {$| \psi _{7} \rangle $};
\draw (23,180) node [anchor=north west][inner sep=0.75pt]    {$| \psi _{6} \rangle $};
\draw (130,228) node [anchor=north west][inner sep=0.75pt]    {$\mathsf{ST}$};
\draw (197,198) node [anchor=north west][inner sep=0.75pt]    {$\mathsf{ST}$};
\draw (23,146) node [anchor=north west][inner sep=0.75pt]    {$| \psi _{5} \rangle $};
\draw (130,168) node [anchor=north west][inner sep=0.75pt]    {$\mathsf{ST}$};
\draw (267,132) node [anchor=north west][inner sep=0.75pt]    {$\mathsf{ST}$};
\draw (130,104) node [anchor=north west][inner sep=0.75pt]    {$\mathsf{ST}$};
\draw (197,75) node [anchor=north west][inner sep=0.75pt]    {$\mathsf{ST}$};
\draw (23,117) node [anchor=north west][inner sep=0.75pt]    {$| \psi _{4} \rangle $};
\draw (23,83) node [anchor=north west][inner sep=0.75pt]    {$| \psi _{3} \rangle $};
\draw (23,54) node [anchor=north west][inner sep=0.75pt]    {$| \psi _{2} \rangle $};
\draw (23,20) node [anchor=north west][inner sep=0.75pt]    {$| \psi _{1} \rangle $};
\draw (130,41) node [anchor=north west][inner sep=0.75pt]    {$\mathsf{ST}$};
\draw (490,164) node [anchor=north west][inner sep=0.75pt]  [font=\footnotesize]  {$H$};
\draw (415,164) node [anchor=north west][inner sep=0.75pt]  [font=\footnotesize]  {$H$};

\end{tikzpicture}
    \caption{Visualisation of the Iterated Swap Tree (\hyperref[alg:IST]{IST}) protocol for $n=8$.}
    \label{fig:SWAP-tree}
\end{figure} 

Note that Algorithm \ref{alg:IST} is always correct on the equal input case. To analyze the performance of Algorithm~\ref{alg:IST} we want to estimate the probability that the algorithm accepts the input state as being the all equal state while the input was actually a state on which a subset was orthogonal to the other states. To do so, consider what the Swap test tells us on two input states. If the states are equal the measurement outcome will \textit{always} be `0', if the states are orthogonal the measurement outcome will be either `0' or `1' with 
probability $1/2$ \cite{buhrman2001fingerprinting}. We call the cases in which the Swap test may detect that the inputs are orthogonal a \textit{possible click}. If the input states are all equal, all the Swap tests in the tree will get measurement outcome `0'. If a subset of the input states is orthogonal to the others, we are interested in the number of possible clicks, i.e. the number of times the Swap test could tell us two states are orthogonal. Suppose we apply the Swap test to two orthogonal states $\ket{\psi}, \ket{\phi}$, then the \textit{post-measurement} states will be:
\begin{align}
    \frac{1}{\sqrt{2}} \ket{\psi}\ket{\phi} &+ \ket{\phi}\ket{\psi}, &\text{or}& &\frac{1}{\sqrt{2}} \ket{\psi}\ket{\phi} &- \ket{\phi}\ket{\psi},
\end{align}
corresponding to measurement outcomes `0' and `1' respectively.
 
For example, if a single state is orthogonal to the other states, then, it is clear that in every layer of the \hyperref[alg:IST]{IST} in Figure~\ref{fig:SWAP-tree} there is a single possible click, no matter where the state is placed. The \hyperref[alg:IST]{IST} protocol will only not detect this single input being orthogonal if all the measurement outcomes were `0', which happens with probability $\left(\frac{1}{2}\right)^{\log n} = \frac{1}{n}$. If the inputs were all equal, the \hyperref[alg:IST]{IST} protocol will always be correct. Therefore if again half the time the input is all equal and the other half there is a single orthogonal state the probability of correctly identifying the state with the \hyperref[alg:IST]{IST} is:
\begin{align}
    \frac{1}{2} \cdot 1 + \frac{1}{2} \cdot \left(1 - \frac{1}{n} \right) = 1 - \frac{1}{2n}.
\end{align}
Which is equal to the $h=1$ case of Theorem \ref{thm:permtest_optimal_1/2_binary}, thus we see that for the $h=1$ case the \hyperref[alg:IST]{IST} is in fact optimal. Interestingly, the position of the orthogonal state is not used by the \hyperref[alg:IST]{IST}, as is also the case for the permutation test, but both protocols are nonetheless optimal.

Of interest is now how well the \hyperref[alg:IST]{IST} protocol performs for different input partitions. The performance depends purely on the number of possible clicks for the partition. For a specific input one can compute the number of possible clicks by checking if the input ports in every Swap test are orthogonal or not. However, it is more difficult to say something about the number of possible clicks if the inputs are randomly permuted. We want to get the soundness probability as an averaged quantity only depending on $n$ and $h$. To say something quantitatively we consider the case where there is a partition into two parts ($\mu = (n-h, h)$), and we propose a lower bound to the performance of the \hyperref[alg:IST]{IST} protocol by bounding the number of possible clicks. 

The idea is as follows, in the two input ports of a Swap test consider the number of orthogonal states in each port. If this number of states is different between the two ports, we know for sure that the input states are orthogonal to each other and we get a possible click. If the number of orthogonal states are equal it is possible to have a possible click, but as a simplification we disregard this possibility. Then, we get a lower bound on the number of possible clicks.

One can compute this lower bound on the number of possible clicks iteratively for every layer:
\begin{itemize}
    \item Check for every input pair if the number of orthogonal states in each input port is different. If yes, add one to a possible click counter. 
    \item Add the number of orthogonal states in each port and send this through to the next port.
\end{itemize}
As a simple example consider the input with $h=3,n=8$:
\begin{align}
    \ket{\phi}  \ket{\psi} \ket{\psi} \ket{\psi} \ket{\phi} \ket{\psi} \ket{\psi} \ket{\phi},
\end{align}

where $\bra{\phi} \ket{\psi} = 0$. Then we can write this input as `1,0,0,0,1,0,1,1', where `1' denotes the state $\ket{\phi}$ and `0' $\ket{\psi}$. Then there are $2$ cases where the number of orthogonal states in the ports are different. For the second layer, we add all the pairs of inputs to get `1,0,1,2', which has $2$ of these cases. Finally, for the last layer, we get `1,3', which has $1$ possible click. Thus the total number of possible clicks for this input was at least $5$. So the probability to detect that this input was the unequal input is at least $1 - \left(\frac{1}{2}\right)^5$. Counting this way gives us a recursive way to lower the possible clicks and the performance of the \hyperref[alg:IST]{IST} protocol by a specific quantity. To see why this counting strategy does not count every case, we give an example of a state initialization where the two input ports of the Swap test have the same number of orthogonal states, but nonetheless, the input states are orthogonal. Consider the input:
\begin{align}
    \ket{\phi}\ket{\phi}\ket{\psi}\ket{\psi} \ket{\phi}\ket{\psi}\ket{\phi}\ket{\psi},
\end{align}
and suppose we arrive at the final Swap test where both ports have 4 states as input. We can assume we have not had a measurement outcome `1' yet on any of the Swap tests, since we stop when we detect this. Note that both ports will have 2 orthogonal states in them, so our counting procedure would not count this case as a measurement where a possible click can occur. Then the input state in the first port is
\begin{align}
    \frac{1}{\sqrt{2}} \left( \ket{\phi\phi\psi\psi} + \ket{\psi\psi\phi\phi} \right),
\end{align}
and of the second state
\begin{align}
    \frac{1}{2}\left( \ket{\phi\psi\phi\psi} + \ket{\phi\psi\psi\phi} + \ket{\psi\phi\psi\phi} + \ket{\psi\phi\phi\psi} \right).
\end{align}
Note that both states are orthogonal to each other, so we have a possible click. So we get a lower bound on the number of possible clicks. The recursive structure of counting this way allows us to derive a general upper bound for the soundness error of the Iterated Swap Tree.

\subsection{A recursive upper bound}

\begin{theorem}
    The soundness probability of the \hyperref[alg:IST]{IST} for $n=2^m$ and $h \in [n]$, with the promise that $\mu = (n-h,h)$ is 
    \begin{equation}
        \mathbb{P}^{\mathrm{IST}}_{s}(n,h) \geq 1 - \frac{\gamma(h,\log_2(n))}{\binom{n}{h}},
    \end{equation}
    where $\gamma(h,m)$ satisfies the recurrence relation for every integer $h \geq 2$ and integer $m \geq 0$
    \begin{equation}
        \gamma(h,m) = \sum_{k=0}^{\floor{\frac{h}{2}}} \gamma(k,m-1) \gamma(h-k,m-1)
    \end{equation}
    with the boundary conditions $\gamma(0,m) = \gamma(1,m) = 1$ for every integer $m \geq 0$ and $\gamma(h,0) = 0$ for every integer $h \geq 2$.
\label{thm:swap_trees}
\end{theorem}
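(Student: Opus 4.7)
The plan is to bound the soundness error by first controlling, for each fixed placement $\sigma$ of the $h$ orthogonal states among the $n = 2^m$ input positions, the probability $p(\sigma)$ that every Swap test in the tree returns outcome $0$, and then averaging $p(\sigma)$ over the uniformly random permutation applied in step~(1) of the \hyperref[alg:IST]{IST}. For each such $\sigma$ I would count a ``certain click'' at every internal node $v$ of the binary Swap tree for which the two subtrees feeding into $v$ contain different numbers of orthogonal inputs; let $c(\sigma)$ denote this total count. The central intermediate claim is that $p(\sigma) \leq (1/2)^{c(\sigma)}$.

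To prove this claim I would induct on the depth $m$. The structural fact I would use is that a Swap test preserves the $\phi$-count of each computational basis state, so conditioned on every Swap test inside a subtree having returned $0$, the post-measurement state of that subtree is supported entirely on the $k$-th $\phi$-weight subspace of its registers, where $k$ is the number of $\ket{\phi}$'s fed into that subtree. Whenever the two children of an internal node $v$ carry different weights $k_L \neq k_R$, their post-measurement states therefore lie in orthogonal subspaces, their overlap vanishes, and the Swap test at $v$ returns $0$ with probability exactly $1/2$; otherwise we apply only the trivial upper bound $1$. Combining these conditional probabilities up the tree, using that sibling subtrees act on disjoint registers (so their outcomes are independent) and invoking the inductive hypothesis on each side, yields $p(\sigma) \leq (1/2)^{c(\sigma)}$.

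The combinatorial step is to identify $F(h,m) \defeq \sum_{\sigma \in \binom{[n]}{h}} (1/2)^{c(\sigma)}$ with $\gamma(h,m)$. Partitioning the sum by the number $k$ of orthogonal positions in the left half of $[n]$ and using that $c(\sigma)$ decomposes as the root indicator $[k \neq h-k]$ plus the click counts of the two halves gives $F(h,m) = \sum_{k=0}^{h}(1/2)^{[k \neq h-k]} F(k,m-1) F(h-k,m-1)$; pairing the terms $k$ and $h-k$ (each contributing a factor $1/2$ when unequal) collapses this to $\sum_{k=0}^{\lfloor h/2 \rfloor} F(k,m-1) F(h-k,m-1)$, matching the stated recursion for $\gamma$. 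The boundary conditions $F(0,m) = 1$, $F(1,m) = 1$ (because a single orthogonal position produces a certain click at each of its $m$ ancestors and there are $2^m$ placements), and $F(h,0) = 0$ for $h \geq 2$ (no valid placements into a single leaf) coincide with those of $\gamma$, forcing $F = \gamma$. Averaging $p(\sigma) \leq (1/2)^{c(\sigma)}$ over the uniform random permutation supplied in step~(1) of the \hyperref[alg:IST]{IST} then gives $1 - \mathbb{P}^{\mathrm{IST}}_{s}(n,h) \leq F(h,m)/\binom{n}{h} = \gamma(h,\log_2 n)/\binom{n}{h}$, which is the claimed bound.

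The most delicate point is the weight-subspace argument: although one might be tempted to claim that after all outcome-$0$s the subtree state lands in the fully symmetric weight-$k$ subspace, this is false in general (for instance, on four qubits one easily checks that $\ket{S}^{\otimes 2}$ with $\ket{S} = \tfrac{1}{\sqrt{2}}(\ket{\phi\psi}+\ket{\psi\phi})$ is \emph{not} the uniform symmetric weight-$2$ state, yet it is orthogonal to the weight-$2$ state $\tfrac{1}{\sqrt{2}}(\ket{\phi\phi\psi\psi}+\ket{\psi\psi\phi\phi})$). All the induction actually needs is the weaker statement that Swap tests preserve the $\phi$-weight sector, so different weights on the two sides force orthogonal post-measurement states — and it is precisely because $c(\sigma)$ omits the possible clicks between equal-weight but distinct states that Theorem~\ref{thm:swap_trees} is stated as an inequality rather than an equality.
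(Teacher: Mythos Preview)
Your proposal is correct and follows essentially the same approach as the paper: both lower-bound the number of detection opportunities by counting the internal nodes whose two subtrees contain different numbers of orthogonal inputs, derive the same recursion by splitting on the left/right counts $(k,h-k)$ and pairing symmetric terms, and verify the same boundary conditions. Your treatment is in fact slightly more careful than the paper's in justifying the key inequality $p(\sigma)\le(1/2)^{c(\sigma)}$ via the $\phi$-weight--subspace preservation argument and the independence of sibling subtrees; the paper asserts the orthogonality of unequal-weight ports more informally.
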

\begin{proof}

Note that $n/2 \geq h$ by definition, otherwise we could say the other set of states is the orthogonal part. For $n = 2^m$ and a fixed $h$, we know that the probability to \textit{not} detect will be $\mathbb{E}[\left( \frac{1}{2} \right)^{\#\text{poss. clicks}(h,m)}]$, where the expected value is taken over all possible input permutations. For $n = 2^m$ and the number of orthogonal states $h$, we define $\alpha(h,m)$ as $\frac{1}{2}$ to the power of our value of the counting process described above averaged over all permutations. Since our counting process undercounts the number of clicks, we have $\alpha(h,m) \geq \mathbb{E}[\left( \frac{1}{2} \right)^{\#\text{poss. clicks}(h,m)}]$. For a specific input we can compute the final value of our counting process recursively as follows:
\begin{itemize}
    \item Divide the input into two equally sized halves and count the number of orthogonal states in each of the halves. If the inputs are of size 1 stop.
    \item These two numbers will be exactly the number of orthogonal states in each port in the last layer of the protocol.
    \item If the numbers are different add 1 to the counter, else do nothing.
    \item Now do this process again to both of the halves of the original input.
\end{itemize}
The insight to why this recursion works is that if the final layer is removed we have two \hyperref[alg:IST]{IST} independent from each other on the first and second half of the inputs (see Figure \ref{fig:SWAP-tree}). With this process, we can write down a formula for $\alpha(h,m)$:
\begin{align}
    \alpha(h,m) = \sum_{k=0}^h \mathbb{P}[k, h-k] \left(\frac{1}{2}\right)^{1 - \delta_{k, h-k}} \alpha(k, m-1) \alpha(h - k, m-1),
\end{align}
where $\delta_{x,h-x}$ is the Kronecker delta which tracks if the amount of orthogonal states in the first and second half are equal. And $\mathbb{P}[k, h-k]$ is the probability that the state has $k$ orthogonal states in the first part and $h-k$ in the second part. For the boundary conditions of $\alpha$ we know that if $h$ = 0, we will not count anything so $\alpha(0,m) = \left(\frac{1}{2}\right)^0 = 1$, and if $h$ = 1, we get a count in every layer so $\alpha(1,m) = \left(\frac{1}{2}\right)^m$. When there are no more divisions we don't count any possible clicks anymore so $\alpha(h,0)=0$. Note that these boundary conditions define a unique $\alpha(h,m)$.

When we randomly permute all the inputs we can compute this probability:
\begin{align}
    \alpha(h,m) &= \sum_{k=0}^h \frac{\binom{2^{m-1}}{k}\binom{2^{m-1}}{h-k}}{\binom{2^m}{h}}  \left(\frac{1}{2}\right)^{1 - \delta_{k, h-k}} \alpha(k, m-1) \alpha(h - k, m-1) \\
    \binom{2^m}{h} \alpha(h,m) &= \sum_{k=0}^h \left(\frac{1}{2}\right)^{1 - \delta_{k, h-k}} \binom{2^{m-1}}{k}\alpha(k, m-1) \binom{2^{m-1}}{h-k}\alpha(h - k, m-1) \\
    \gamma(h,m) &= \sum_{k=0}^h \left(\frac{1}{2}\right)^{1 - \delta_{k, h-k}} \gamma(k,m-1) \gamma(h-k, m-1),
\end{align}
where we have defined $\gamma(h,m) = \binom{2^m}{h} \alpha(h,m)$. Using that the expressions in the sum are equal for $k=i$ and $k=h-i$ we can further simplify the expression. Note if $h$ is odd we have that $k$ is never equal to $h-k$, so we get:
\begin{align}
    \gamma(h,m) &= \sum_{k=0}^h \left(\frac{1}{2}\right) \gamma(k,m-1) \gamma(h-k, m-1) = \sum_{k=0}^{\floor{\frac{h}{2}}} \gamma(k,m-1) \gamma(h-k, m-1).
\end{align}
If $h$ is even we have
\begin{align}
    \gamma(h,m) &= \sum_{k=0}^{\floor{\frac{h}{2}} -1} \gamma(k,m-1) \gamma(h-k, m-1) + \left(\frac{1}{2}\right)^{1 - \delta_{h/2,h/2}} \gamma(h/2, m-1)\gamma(h/2, m-1) \\
    &= \sum_{k=0}^{\floor{\frac{h}{2}}} \gamma(k,m-1) \gamma(h-k, m-1).
\end{align}
We get the same expressions for $h$ odd and even so we can write in general:
\begin{align}
    \gamma(h,m) &= \sum_{k=0}^{\floor{\frac{h}{2}}} \gamma(k,m-1) \gamma(h-k, m-1).
\end{align}
With boundary conditions defined by $\alpha(h,m)$, namely $\gamma(0,m) = \binom{2^m}{0} \alpha(0,m) = 1$, $\gamma(1,m) = \binom{2^m}{1} \alpha(1,m) = 2^m \left(\frac{1}{2}\right)^m = 1$, and $\gamma(h,0) = 0$. Finally, we get the lower bound on the soundness probability
\begin{align}
        \mathbb{P}^{\mathrm{IST}}_{s}((n,h)) = 1 - \mathbb{E}\left[\left( \frac{1}{2} \right)^{\#\text{poss. clicks}(h,m)}\right] \geq 1 - \alpha(h, \log_2(n)) = 1 - \frac{\gamma(h, \log_2(n))}{\binom{n}{h}},
\end{align}
as desired.
\end{proof}

Note that $\gamma(h,\log_2(n))$ is increasing in $h$ up until $n/2$, as well as increasing in $n$, so having more orthogonal states instead of the other states will give us better bounds on the soundness, but also having more states in general will always give better bounds.

The proof above assumes a partition into two parts $\mu = (n-h, h)$. For a general partition $\mu$ a way to lower bound the number of possible clicks can be done as follows. Join the partitions together such that we get two parts both of maximal size, then we can apply the theorem again and since $\gamma(h,\log_2(n))$ is increasing in $h$ we take it as close as possible to $n/2$. Then regard all the states in one partition as the same (i.e. the $0$ and $1$), and apply our counting strategy to get a lower bound on the number of possible clicks. To see that this again undercounts, note that seeing two states as the same states does not increase the number of possible clicks, and only does not count possible clicks between partitions that get combined. Then we get the following corollary:

\begin{corollary}
    The soundness probability of the \hyperref[alg:IST]{IST} for $n=2^m$ and some partition $\mu$ is lower bounded by,
    \begin{align}
        \mathbb{P}^{\mathrm{IST}}_s(\mu) \geq 1 - \frac{\gamma(h^*,\log_2(n))}{\binom{n}{h^*}},
    \end{align}
    where $h^* = \sum_{i=k}^h \mu_i$, where $k = \min\{l \, | \, \sum_{i=l}^h \mu_i \leq \frac{n}{2}\}$.
\end{corollary}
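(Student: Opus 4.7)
The plan is to reduce the general partition case to the binary partition case already handled by Theorem~\ref{thm:swap_trees}, by observing that any counting procedure which systematically \emph{undercounts} possible clicks still produces a valid lower bound on the soundness probability. Concretely, for a fixed realization of the random permutation $P$, the protocol's non-detection probability is $(1/2)^{c}$ where $c$ is the true number of possible clicks; so if $c' \leq c$ pointwise for some alternative count, then $\mathbb{E}_P[(1/2)^{c}] \leq \mathbb{E}_P[(1/2)^{c'}]$, and any upper bound on the latter gives a lower bound on the soundness.

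First, I would group the $h$ distinct orthogonal ``colors'' (the non-zero parts of $\mu$) into two blocks. Take block $B$ to be the union of the smallest parts $\mu_k,\mu_{k+1},\dots,\mu_h$, where $k$ is chosen as in the statement so that $h^* = |B| = \sum_{i=k}^h \mu_i \leq n/2$ but adjoining $\mu_{k-1}$ would exceed $n/2$. Let $A$ consist of the remaining $n - h^*$ states. Relabel every state in $A$ as ``$0$'' and every state in $B$ as ``$1$'', thereby producing a binary input of type $(n-h^*, h^*)$.

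Next, I would run the counting procedure from the proof of Theorem~\ref{thm:swap_trees} on this binarized input. The key claim is that the number of binarized clicks is at most the number of true possible clicks of the original multicolor input, under the \emph{same} random permutation $P$. Indeed, the binarized counting records only two types of events: (i) within a Swap test the two input ports contain unequal totals of $A$-states versus $B$-states; every such event corresponds to at least one genuine pair of orthogonal states (an $A$-state paired against a $B$-state), so it is a true possible click; and (ii) it ignores Swap tests where orthogonality arises solely between two different colors both lying inside $A$ or both inside $B$, which only reduces the count. Hence the binarized count is pointwise dominated by the true count, and Theorem~\ref{thm:swap_trees} applied to the binary partition $(n-h^*,h^*)$ yields
\begin{equation*}
    \mathbb{P}^{\mathrm{IST}}_s(\mu) \;\geq\; 1 - \frac{\gamma(h^*,\log_2 n)}{\binom{n}{h^*}},
\end{equation*}
as required. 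The greedy choice of $h^*$ is justified by the remark following Theorem~\ref{thm:swap_trees}: the bound is tightest when $h^*$ is as close to $n/2$ as possible, and the definition of $k$ maximizes $h^*$ subject to the constraint $h^* \leq n/2$ needed to apply the binary theorem.

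The main obstacle is the rigor of the undercounting claim, since the binarization must respect the averaging over $P$. The subtle point is that both the original and the binarized counting are deterministic functions of a single configuration (the placement of colors in the $n$ registers induced by $P$), so the pointwise inequality ``binarized clicks $\leq$ true clicks'' holds configuration-by-configuration and survives the expectation over $P$. Once this is in place, invoking Theorem~\ref{thm:swap_trees} for the $(n-h^*,h^*)$ instance is immediate and the corollary follows.
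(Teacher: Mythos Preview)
Your proof is correct and follows essentially the same approach as the paper: binarize the multicolor input by merging parts into two blocks $A$ and $B$ of sizes $n-h^*$ and $h^*$, observe that the binarized click-count undercounts the true click-count pointwise in the permutation $P$, and then invoke Theorem~\ref{thm:swap_trees} on the resulting $(n-h^*,h^*)$ instance. Your treatment is in fact more careful than the paper's informal paragraph preceding the corollary, since you explicitly isolate the pointwise inequality and explain why it survives the expectation over $P$; the only minor imprecision is the phrase ``an $A$-state paired against a $B$-state'' in (i), which at higher levels of the tree should be read as ``the two port states have different $B$-totals, hence different color multisets, hence are orthogonal.''
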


\begin{lemma}
    Fix $n=2^m$ and an integer $h \geq 1$. Then:
    \begin{enumerate}
        \item for every $h \geq 2$, $\gamma(h,m) = 0$ for every integer $m < \log_2(h)$,
        \item for any integers $h,m$ we have a symmetry $\gamma(h,m) = \gamma(n - h,m)$,
        \item $\gamma(h,m)$ is a polynomial in $m$ of degree $h-1$. 
    \end{enumerate} 
    \label{lem:polynomial}
\end{lemma}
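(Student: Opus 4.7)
\medskip

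\noindent\emph{Proof plan.} I would prove the three statements in sequence by induction, using only the defining recurrence for $\gamma(h,m)$ together with the boundary conditions $\gamma(0,m)=\gamma(1,m)=1$ and $\gamma(h,0)=0$ for $h\geq 2$. For \textbf{item 1}, induct on $m$: the case $m=0$ is exactly the boundary condition, and for $m\geq 1$, $h\geq 3$ with $2^m<h$, every index $k$ in the recurrence sum satisfies $h-k\geq\lceil h/2\rceil\geq 2$ and $h-k>2^{m-1}$ (since $h>2^m$ forces $\lceil h/2\rceil\geq 2^{m-1}+1$); by the induction hypothesis the factor $\gamma(h-k,m-1)$ vanishes, so the entire sum vanishes. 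The edge case $h=2$ reduces to $\gamma(2,0)=0$, which is in the base.

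For \textbf{item 2}, I would pass to generating functions. Symmetrizing the summation index $k \leftrightarrow h-k$, the recurrence is equivalent to $2\gamma(h,m)=\sum_{k=0}^h\gamma(k,m-1)\gamma(h-k,m-1)+[h\text{ even}]\gamma(h/2,m-1)^2$. Introducing $F_m(x)\defeq\sum_h\gamma(h,m)x^h$ (a polynomial of degree at most $2^m$ by item 1) and $G_{m-1}(y)\defeq\sum_j\gamma(j,m-1)^2 y^j$, this becomes $2F_m(x)=F_{m-1}(x)^2+G_{m-1}(x^2)$. Induction on $m$ then propagates the palindromic identity $F_m(x)=x^{2^m}F_m(1/x)$: the base $F_0(x)=1+x$ is visibly palindromic, and if $F_{m-1}$ is palindromic of degree $2^{m-1}$ then so is $G_{m-1}$ (by squaring its coefficients), making both $F_{m-1}(x)^2$ and $G_{m-1}(x^2)$ palindromic of degree $2^m$; hence $F_m$ inherits the same symmetry, which is precisely $\gamma(h,m)=\gamma(2^m-h,m)$.

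For \textbf{item 3}, I would induct on $h$, with base $h=1$ trivially giving the constant polynomial $1$. For $h\geq 2$, peeling off the $k=0$ term of the recurrence yields $\gamma(h,m)-\gamma(h,m-1)=\sum_{k=1}^{\lfloor h/2\rfloor}\gamma(k,m-1)\gamma(h-k,m-1)$, and by the induction hypothesis each summand is a polynomial in $m$ of degree $(k-1)+(h-k-1)=h-2$; hence the right-hand side is a polynomial $P(m)$ of degree at most $h-2$. Telescoping from $\gamma(h,0)=0$ and applying Faulhaber's formula to $\gamma(h,M)=\sum_{m=1}^M P(m)$ gives a polynomial in $M$ of degree $\deg P+1\leq h-1$. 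The \textbf{main obstacle} is pinning down that this degree equals $h-1$ exactly rather than being strictly less; I would overcome this by strengthening the induction hypothesis to also track positivity of the leading coefficient, so that every product $\gamma(k,m-1)\gamma(h-k,m-1)$ contributes a strictly positive leading term of degree $h-2$, no cancellation occurs in $P$, and $\gamma(h,\cdot)$ has degree exactly $h-1$.
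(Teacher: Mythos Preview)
Your proposal is correct. For items~(1) and~(3) your argument is essentially the paper's: the paper dismisses~(1) as ``obvious from the definition'' (your induction on $m$ makes this explicit and is fine), and for~(3) both you and the paper induct on~$h$, peel off the $k=0$ term, telescope from $\gamma(h,0)=0$, and apply Faulhaber/Bernoulli. Your additional step of tracking positivity of leading coefficients to pin the degree at \emph{exactly} $h-1$ is a genuine improvement in rigor over the paper, which asserts the degree without ruling out cancellation in the sum over $k$.

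The one genuinely different route is item~(2). The paper again says ``obvious from the definition,'' implicitly invoking the combinatorial interpretation $\gamma(h,m)=\binom{2^m}{h}\alpha(h,m)$: since the undercounting procedure for $\alpha$ only compares counts of marked states in each half, swapping the roles of marked and unmarked states leaves $\alpha$ unchanged, and $\binom{n}{h}=\binom{n}{n-h}$ does the rest. Your generating-function argument instead works purely from the recurrence and boundary data: you symmetrize to get $2F_m(x)=F_{m-1}(x)^2+G_{m-1}(x^2)$ and propagate the palindromic identity $F_m(x)=x^{2^m}F_m(1/x)$ by induction. This is more self-contained (no appeal to the probabilistic origin of $\gamma$) and would survive if the lemma were ever detached from its combinatorial context; the paper's route is shorter but leans on structure outside the recurrence itself.
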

\begin{proof}
    (1) and (2) are obvious from the definition of $\gamma(h,m)$. Let's prove (3). We proceed by induction in $h$. The base of induction $h=1$ is clear: $\gamma(1,m) = 1$ for every $m$ from definition. Now assume $h \geq 2$ and $\gamma(h-1,m) = p_{h-2}(m)$, where $p_{h-2}(m)$ is a polynomial of degree $h-2$.  Then using the defining relation of $\gamma(h,m)$ we get
    \begin{align}
        \gamma(h,m) &= \gamma(h,m-1) + \sum_{k=1}^{\floor{\frac{h}{2}}} \gamma(k,m-1) \gamma(h-k,m-1)
        \\
        &= \gamma(h,m-2) + \sum_{k=1}^{\floor{\frac{h}{2}}} \sum_{x=m-2}^{m-1} \gamma(k,x) \gamma(h-k,x) \\
        &= \sum_{k=1}^{\floor{\frac{h}{2}}} \sum_{x=0}^{m-1} \gamma(k,x) \gamma(h-k,x) = \sum_{k=1}^{\floor{\frac{h}{2}}} \sum_{x=0}^{m-1} p_{k-1}(x)p_{h-k-1}(x) = \sum_{k=1}^{\floor{\frac{h}{2}}} \sum_{x=0}^{m-1} q_{h-2}(x),
    \end{align}
    where $q_{h-2}(x)$ is some polynomial of degree $h-2$. But
    \begin{equation}
        \sum_{x=1}^{m-1} q_{h-2}(x) = p_{h-1}(m-1),
    \end{equation}
    for some polynomial $p_{h-1}(m-1)$ of degree $h-1$ due to the well known Bernoulli's formula
    \begin{equation}
        \sum_{x=1}^{m-1} x^{h-2} = \frac{(m-1)^{h-1}}{h-1} + g(m-1),
    \end{equation}
    where $g(m-1)$ is a polynomial of degree $h-2$. Therefore we have
    \begin{equation}
         \gamma(h,m) = \sum_{k=1}^{\floor{\frac{h}{2}}} p_{h-1}(m-1),
    \end{equation}
    so $\gamma(h,m)$ is a polynomial of degree $h-1$.
\end{proof}

We will now show that for any fixed $h$, for sufficiently large $m$ the \hyperref[alg:IST]{IST} will achieve optimal soundness error.

\begin{corollary} Let $n = 2^m$ for $m \in \mathbb{N}$ and fix any $1 \leq h \leq n/2 $ Then there exists a $n^*_0 \in \mathbb{N}$ such that the \hyperref[alg:IST]{IST} protocol achieves perfect completeness and for all $n \geq n^*_0$ has soundness probability $\geq 1- 1/n$.
\label{cor:IST}
\end{corollary}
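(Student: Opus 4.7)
The plan is to combine the lower bound from \Cref{thm:swap_trees} with the polynomial growth established in \Cref{lem:polynomial}, and compare the resulting expression against $1/n$. Concretely, by \Cref{thm:swap_trees} we have
\begin{equation}
    \mathbb{P}^{\mathrm{IST}}_s(n,h) \;\geq\; 1 - \frac{\gamma(h,m)}{\binom{2^m}{h}},
\end{equation}
so it suffices to show that $\gamma(h,m)/\binom{2^m}{h} \leq 1/2^m$ for all sufficiently large $m$, equivalently $2^m\,\gamma(h,m) \leq \binom{2^m}{h}$.

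First I would dispose of the trivial case $h=1$: the boundary condition $\gamma(1,m)=1$ together with $\binom{2^m}{1}=2^m$ yields soundness at least $1-1/n$ for every $m$, so $n^*_0$ may be taken to be $2$. The interesting case is $h\geq 2$. Here I invoke \Cref{lem:polynomial}(3), which says that $\gamma(h,m)$ is a polynomial in $m$ of degree $h-1$; in particular there exists a constant $C_h$ (depending only on $h$) such that $\gamma(h,m)\leq C_h\,(m+1)^{h-1}$ for all $m\geq 0$.

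Next I would estimate $\binom{2^m}{h}$ from below. Since $h$ is fixed and $2^m\to\infty$, for all $m$ large enough that $2^m\geq 2h$ we have the elementary bound
\begin{equation}
    \binom{2^m}{h} \;=\; \frac{2^m(2^m-1)\cdots(2^m-h+1)}{h!} \;\geq\; \frac{(2^m/2)^h}{h!} \;=\; \frac{2^{m(h-1)}\cdot 2^m}{2^h\, h!}.
\end{equation}
Combining this with the polynomial upper bound on $\gamma(h,m)$ gives
\begin{equation}
    2^m\,\gamma(h,m) \;\leq\; 2^m\, C_h\,(m+1)^{h-1} \qquad\text{and}\qquad \binom{2^m}{h} \;\geq\; \frac{2^m\cdot 2^{m(h-1)}}{2^h\,h!}.
\end{equation}
Since $h\geq 2$, the ratio of the right-hand sides is of order $(m+1)^{h-1}/2^{m(h-1)}$, a quantity that tends to $0$ as $m\to\infty$. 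Therefore there exists $m_0(h)\in\mathbb{N}$ with $2^m\,\gamma(h,m)\leq \binom{2^m}{h}$ for all $m\geq m_0(h)$, and setting $n_0^* \defeq 2^{m_0(h)}$ completes the proof.

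No serious obstacle arises: \Cref{lem:polynomial}(3) already packages the hardest part (the polynomial-in-$m$ behaviour of $\gamma$), after which everything reduces to the observation that polynomial-in-$m$ growth is dominated by the exponential-in-$m$ growth of $\binom{2^m}{h}/2^m$ whenever $h\geq 2$. The only minor point requiring care is making the threshold $n_0^*$ depend on $h$, which is fine because $h$ is fixed in the statement.
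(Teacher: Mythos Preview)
Your argument is correct and follows essentially the same approach as the paper: both invoke \Cref{lem:polynomial}(3) to bound $\gamma(h,m)$ polynomially in $m$, lower-bound $\binom{2^m}{h}$ by an elementary estimate, and then observe that polynomial growth in $m$ loses to the exponential $2^{m(h-1)}$ for $h\geq 2$. The only cosmetic difference is the choice of binomial lower bound (the paper uses $\binom{n}{h}\geq (n/h)^h$, you use $(n/2)^h/h!$), and you should add a one-line remark that perfect completeness is immediate from the protocol (as the paper does), since the statement asks for it.
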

\begin{proof}
Perfect completeness follows from Theorem~\ref{thm:swap_trees}. Since we are interested in the worst-case error in the soundness setting, we can assume $\mu = (n-h,h)$.  From Lemma~\ref{lem:polynomial}, we know that $\gamma(h,m)$ is a polynomial in $m$ of degree $h-1$. Using that $m= \log_2 n$, we have by Theorem~\ref{thm:swap_trees} that the soundness probability, given a fixed $h$, can be lower bounded as
\begin{equation}
        \mathbb{P}^{\mathrm{IST}}_{s}(n,h) \geq 1 - \frac{\gamma(h,\log_2(n))}{\binom{n}{h}} \geq
        1- \frac{ c_h   (\log_2 n )^{h-1} }{\left(\frac{n}{h}\right)^h} = 1- \frac{1}{n} \cdot f(n,h),
\end{equation}
where $f(n,h)$ is given by
\begin{align*}
    f(n,h) \defeq c_h h \left(\frac{h \log_2 n}{n}\right)^{h-1},
\end{align*}
with $c_h$ is a constant that only depends on $h$.  Note that we have $c_1 = 1$ and $f(n,1)=1$. For any $2 \leq h  \leq n/2$, we have that there exists a constant $n_0$ which only depends on $c_h$ and $h$ such that for all $n \geq n_0$ $f(n,h) \leq 1$. Hence, taking $n^*_0$ to be the maximum over all such $n_0$ for all $h$, the statement directly follows.
\end{proof}

\section{Discussion and open problems}
In the definition of the Quantum State Identity problem we considered the case in which the input states are \textit{exactly} orthogonal to each other.
One could relax this condition to states that are approximately orthogonal, i.e. we get the promise that for some $\varepsilon>0$ the inner product between two states $\ket{\psi_i}$ and $\ket{\psi_j}$ is either $|\!\bra{\psi_i} \ket{\psi_j}\!| \geq 1-\varepsilon$ or  $|\!\bra{\psi_i} \ket{\psi_j}\!| \leq \varepsilon$.
It would be natural to still perform the permutation test in this approximate setting and analyze its performance.
A first question is whether the permutation test remains optimal under this relaxed promise.
A second question is whether there exist simpler operations that achieve the same performance as the permutation test, which could imply that in some practical settings one can attain optimal performance with an easier experimentally implementable measurement.

The $G$-test we propose gives an analytical description of the performance of a protocol that applies the permutation test to an arbitrary subgroup $G$.
To better understand the performance one would need to compute or get bounds on the Kostka numbers and the multiplicities of the trivial irrep of this subgroup $G$ inside irreps of the symmetric group $\S_n$.
A further open question is to determine, for a given subgroup $G$ and partition $\mu$, the threshold prior $p^*(G,\mu)$ below which the trivial test outperforms the $G$-test, analogous to the threshold $p^*(\mu)$ established for the permutation test in Theorem~\ref{thm:permtest_trivial_optimal}.

We introduce the Iterated Swap Tree as a new simple protocol that answers the $\hyperref[def:QSI]{\QSI_\mu^{p}}$ problem.
We derive lower bounds for the number of possible clicks which enables us to give bounds on its performance.
The quantity $\gamma(h,m)$ that gives this bound is defined by a recurrence relation.
It would be of interest to derive an exact formula for $\gamma(h,m)$ or get some exact formula $f(h,m)$ that bounds this recurrence relation, to get a better understanding of the performance of the Iterated Swap Tree in the case of non-fixed $h$.
It is also an interesting open question whether the bound on the soundness we derive is tight, or whether the true performance of the Iterated Swap Tree is in fact closer to that of the optimal permutation test.

The proposed Iterated Swap Tree can be seen as a $G$-test where $G$ is the iterated wreath product of the symmetric groups of size 2 \cite{orellana2004rooted}.
This notion can be generalized to symmetric groups of larger prime size $p$, such that inputs of $p^m$ qudits can also be analyzed.
One can then take the wreath product between different-sized groups, which would give the tools to apply a similar strategy as in the Iterated Swap Tree to arbitrarily sized inputs.
Namely, one can take the prime factorization of the number of qudits one receives as input, and then take the iterated wreath product between the symmetric groups whose sizes are these prime factors.

Finally, one can generalize the $\hyperref[def:QSI]{\QSI}$ problem to a more fine-grained property testing problem: given $\rho^{\mu}_{\neq}$ for some $\mu \pt n$ ($\mu_2$ could be zero here), what is the most efficient protocol to reconstruct $\mu$?
This problem for qubits was studied in \cite{montanaro2009symmetric}.
It was shown that the measurement called \emph{weak Schur sampling} \cite{childs2007weak} is optimal for the qubit case.
We conjecture that the qudit generalization of weak Schur sampling would also solve the general qudit problem optimally, since the Schur-Weyl duality structure underlying the optimality argument naturally extends to the higher-dimensional setting.

\section*{Acknowledgements}
DG was supported by an NWO Vidi grant (Project No VI.Vidi.192.109). PVL and HB were supported by the Dutch Research Council (NWO/OCW), as part of the NWO Gravitation Programme Networks (project number 024.002.003). JW was supported by the Dutch Ministry of Economic Affairs and Climate Policy (EZK), as part of the Quantum Delta NL Programme.

\printbibliography

@article{buhrman2001fingerprinting,
    author = {{Buhrman}, Harry and {Cleve}, Richard and {Watrous}, John and {de Wolf}, Ronald},
    title = {Quantum Fingerprinting},
    journal = {Physical Review Letters},
     year = 2001,
    month = oct,
    volume = {87},
    number = {16},
      eid = {167902},
      doi = {http://doi.org/10.1103/PhysRevLett.87.167902}
}

@book{boyd2004convex,
  title={Convex optimization},
  author={Boyd, Stephen P and Vandenberghe, Lieven},
  year={2004},
  publisher={Cambridge university press},
  doi = {http://dx.doi.org/10.1017/CBO9780511804441}
}

@article{orellana2004rooted,
  title={Rooted trees and iterated wreath products of cyclic groups},
  author={Orellana, Rosa C and Orrison, Michael E and Rockmore, Daniel N},
  journal={Advances in Applied Mathematics},
  volume={33},
  number={3},
  pages={531--547},
  year={2004},
  publisher={Elsevier},
  doi = {http://doi.org/10.1016/j.aam.2003.12.001}
}

@article{Chabaud2018optimal,
    title = {Optimal quantum-programmable projective measurement with linear optics},
    author = {Chabaud, Ulysse and Diamanti, Eleni and Markham, Damian and Kashefi, Elham and Joux, Antoine},
    journal = {Phys. Rev. A},
    volume = {98},
    issue = {6},
    pages = {062318},
    numpages = {11},
    year = {2018},
    month = {Dec},
    publisher = {American Physical Society},
    doi = {10.1103/PhysRevA.98.062318},
}

@article{collins2006integration,
    title={Integration with respect to the Haar measure on unitary, orthogonal and symplectic group},
    author={Collins, Beno{\^i}t and {\'S}niady, Piotr},
    journal={Communications in Mathematical Physics},
    volume={264},
    number={3},
    pages={773--795},
    year={2006},
    publisher={Springer},
    doi = {http://doi.org/10.1007/s00220-006-1554-3}
}

@article{harrow2013church,
    title={The church of the symmetric subspace},
    author={Harrow, Aram W},
    journal={arXiv preprint arXiv:1308.6595},
    year={2013},
    doi = {https://doi.org/10.48550/arXiv.1308.6595}
}

@article{kada2008efficiency,
  title={The efficiency of quantum identity testing of multiple states},
  author={Kada, Masaru and Nishimura, Harumichi and Yamakami, Tomoyuki},
  journal={Journal of Physics A: Mathematical and Theoretical},
  volume={41},
  number={39},
  pages={395309},
  year={2008},
  publisher={IOP Publishing},
  doi = {http://doi.org/10.1088/1751-8113/41/39/395309}
}

@article{kraskiewicz2001algebra,
  title={Algebra of coinvariants and the action of a Coxeter element},
  author={Kraskiewicz, Witold and Weyman, Jerzy},
  journal={Bayreuther Mathematische Schriften},
  volume={63},
  pages={265--284},
  year={2001}
}

@article{buhrman2022quantum,
  title={Quantum majority vote},
  author={Buhrman, Harry and Linden, Noah and Man{\v{c}}inska, Laura and Montanaro, Ashley and Ozols, Maris},
  journal={arXiv preprint arXiv:2211.11729},
  year={2022},
  doi = {10.4230/LIPIcs.ITCS.2023.29}
}

@article{swanson2018existence,
  title={On the existence of tableaux with given modular major index},
  author={Swanson, Joshua P},
  journal={Algebraic Combinatorics},
  volume={1},
  number={1},
  pages={3--21},
  year={2018},
  doi = {http://doi.org/10.5802/alco.4}
}

@article{barenco1997stabilization,
  title={Stabilization of quantum computations by symmetrization},
  author={Barenco, Adriano and Berthiaume, Andre and Deutsch, David and Ekert, Artur and Jozsa, Richard and Macchiavello, Chiara},
  journal={SIAM Journal on Computing},
  volume={26},
  number={5},
  pages={1541--1557},
  year={1997},
  publisher={SIAM},
  doi = {http://doi.org/10.1137/S0097539796302452}
}

@article{de2004one,
  title={One-qubit fingerprinting schemes},
  author={de Beaudrap, J Niel},
  journal={Physical Review A},
  volume={69},
  number={2},
  pages={022307},
  year={2004},
  publisher={APS},
  doi = {http://doi.org/10.1103/PhysRevA.69.022307}
}

@article{kobayashi2001quantum,
  title={Quantum certificate verification: Single versus multiple quantum certificates},
  author={Kobayashi, Hirotada and Matsumoto, Keiji and Yamakami, Tomoyuki},
  journal={arXiv preprint quant-ph/0110006},
  year={2001},
  doi = {https://doi.org/10.48550/arXiv.quant-ph/0110006}
}

@article{montanaro2009symmetric,
  title={Symmetric functions of qubits in an unknown basis},
  author={Montanaro, Ashley},
  journal={Physical Review A},
  volume={79},
  number={6},
  pages={062316},
  year={2009},
  publisher={APS},
  doi = {http://doi.org/10.1103/PhysRevA.79.062316}
}

@inproceedings{childs2007weak,
  title={Weak Fourier-Schur sampling, the hidden subgroup problem, and the quantum collision problem},
  author={Childs, Andrew M and Harrow, Aram W and Wocjan, Pawe{\l}},
  booktitle={STACS 2007: 24th Annual Symposium on Theoretical Aspects of Computer Science, Aachen, Germany, February 22-24, 2007. Proceedings 24},
  pages={598--609},
  year={2007},
  organization={Springer},
  doi = {http://doi.org/10.1007/978-3-540-70918-3_51}
}

@inproceedings{im2018generalized,
  title={Generalized iterated wreath products of cyclic groups and rooted trees correspondence},
  author={Im, Mee Seong and Wu, Angela},
  booktitle={Advances in the Mathematical Sciences: AWM Research Symposium, Los Angeles, CA, April 2017},
  pages={15--28},
  year={2018},
  organization={Springer},
  doi = {http://doi.org/10.1007/978-3-319-98684-5_2}
}

@article{roichman1996upper,
  title={Upper bound on the characters of the symmetric groups},
  author={Roichman, Yuval},
  journal={Inventiones mathematicae},
  volume={125},
  pages={451--485},
  year={1996},
  publisher={Springer},
  doi = {http://doi.org/10.1007/s002220050083}
}

@book{handbookSDP,
  title = {Handbook of Semidefinite Programming: Theory, Algorithms, and Applications},
  author = {Wolkowicz, Henry and Saigal, Romesh and Vandenberghe, Lieven},
  isbn = {9781461543817},
  series = {International Series in Operations Research \& Management Science},
  year = {2012},
  publisher = {Springer},
  url = {https://books.google.com/books?id=ErLSBwAAQBAJ},
  doi = {10.1007/978-1-4615-4381-7}
}

@inbook{SDPs,
  title = {Five starter pieces: quantum information science via semi-definite programs},
  author = {Siddhu, Vikesh and Tayur, Sridhar},
  booktitle = {Tutorials in Operations Research: Emerging and Impactful Topics in Operations},
  chapter = {3},
  pages = {59--92},
  doi = {10.1287/educ.2022.0243},
  year = {2021},
  archivePrefix = {arXiv},
  eprint = {2112.08276}
}

@book{watrous,
  title = {The Theory of Quantum Information},
  author = {Watrous, John},
  isbn = {9781107180567},
  lccn = {2018300971},
  year = {2018},
  publisher = {Cambridge University Press},
  url = {https://cs.uwaterloo.ca/~watrous/TQI/}
}

@article{Collins_2022,
   title={The Weingarten Calculus},
   volume={69},
   ISSN={1088-9477},
   url={http://dx.doi.org/10.1090/noti2474},
   DOI={10.1090/noti2474},
   number={05},
   journal={Notices of the American Mathematical Society},
   publisher={American Mathematical Society (AMS)},
   author={Collins, Beno{\^i}t and Matsumoto, Sho and Novak, Jonathan},
   year={2022},
   month=may, pages={1} }

@book{Sagan,
  title = {The symmetric group: representations, combinatorial algorithms, and symmetric functions},
  author = {Sagan, Bruce E.},
  isbn = {9781475768046},
  series = {Graduate Texts in Mathematics},
  url = {https://books.google.com/books?id=Y6vTBwAAQBAJ&pg=PA180},
  year = {2013},
  publisher = {Springer New York}
}

@phdthesis{HarrowThesis,
  title = {Applications of coherent classical communication and the {S}chur transform to quantum information theory},
  author = {Harrow, Aram W.},
  school = {MIT},
  year = {2005},
  url = {http://hdl.handle.net/1721.1/34973},
  archiveprefix = {arXiv},
  eprint = {quant-ph/0512255}
}

@article{bch2006quantumschur,
   title={Efficient quantum circuits for {S}chur and {C}lebsch--{G}ordan transforms},
   volume={97},
   ISSN={1079-7114},
   DOI={10.1103/physrevlett.97.170502},
   number={17},
   journal={Physical Review Letters},
   publisher={American Physical Society (APS)},
   author={Bacon, Dave and Chuang, Isaac L. and Harrow, Aram W.},
   year={2006},
   month={10},
   archivePrefix={arXiv},
   eprint={quant-ph/0407082}
}

@article{kirby2018practical,
  title={A practical quantum algorithm for the {S}chur transform},
  author={Kirby, William M. and Strauch, Frederick W.},
  journal={Quantum Information \& Computation},
  volume={18},
  number={9\&10},
  pages={721--742},
  year={2018},
  publisher={Rinton Press, Incorporated Paramus, NJ},
  doi={10.26421/QIC18.9-10-1},
  eprint={1709.07119},
  archivePrefix={arXiv}
}

@article{wills2023generalised,
  title={Generalised Coupling and An Elementary Algorithm for the Quantum Schur Transform}, 
  author={Wills, Adam and Strelchuk, Sergii},
  year={2023},
  eprint={2305.04069},
  archivePrefix={arXiv},
  primaryClass={quant-ph}
}

@article{grinko2023gelfandtsetlin,
      title={Gelfand--Tsetlin basis for partially transposed permutations and mixed quantum Schur transform}, 
      author={Dmitry Grinko and Adam Burchardt and Maris Ozols},
      year={2023},
      eprint={2310.02252},
      archivePrefix={arXiv},
      primaryClass={quant-ph}
}

@article{nguyen2023mixed,
      title={The mixed Schur transform: efficient quantum circuit and applications}, 
      author={Quynh T. Nguyen},
      year={2023},
      eprint={2310.01613},
      archivePrefix={arXiv},
      primaryClass={quant-ph}
}

@article{Krovi2019,
   title={An efficient high dimensional quantum Schur transform},
   volume={3},
   ISSN={2521-327X},
   DOI={10.22331/q-2019-02-14-122},
   journal={Quantum},
   publisher={Verein zur Forderung des Open Access Publizierens in den Quantenwissenschaften},
   author={Krovi, Hari},
   year={2019},
   month=feb, 
   pages={122} 
}

@article{bradshawCycleIndexPolynomials2023a,
  title = {Cycle Index Polynomials and Generalized Quantum Separability Tests},
  author = {Bradshaw, Zachary P. and LaBorde, Margarite L. and Wilde, Mark M.},
  date = {2023-06},
  journaltitle = {Proceedings of the Royal Society A: Mathematical, Physical and Engineering Sciences},
  shortjournal = {Proc. R. Soc. A.},
  volume = {479},
  number = {2274},
  pages = {20220733},
  issn = {1364-5021, 1471-2946},
  doi = {10.1098/rspa.2022.0733},
  langid = {english}
}

@article{labordeTestingSymmetryQuantum2023a,
  title = {Testing Symmetry on Quantum Computers},
  author = {LaBorde, Margarite L. and Rethinasamy, Soorya and Wilde, Mark M.},
  date = {2023-09-25},
  journaltitle = {Quantum},
  shortjournal = {Quantum},
  volume = {7},
  eprint = {2105.12758},
  eprinttype = {arXiv},
  eprintclass = {quant-ph},
  pages = {1120},
  issn = {2521-327X},
  doi = {10.22331/q-2023-09-25-1120},
  langid = {english}
}

@article{sion1958,
    author = {Maurice Sion},
    title = {{On general minimax theorems.}},
    volume = {8},
    journal = {Pacific Journal of Mathematics},
    number = {1},
    publisher = {Pacific Journal of Mathematics, A Non-profit Corporation},
    pages = {171 -- 176},
    year = {1958},
}

\end{document}